\documentclass[sn-mathphys,Numbered]{sn-jnl}
\usepackage{multirow}%
\usepackage{amsmath,amssymb,amsfonts}%
\usepackage{amsthm}%
\usepackage{thmtools}

\usepackage{mathrsfs}%
\usepackage[title]{appendix}%
\usepackage{xcolor}%
\usepackage{textcomp}%
\usepackage{manyfoot}%
\usepackage{booktabs}%
\usepackage{algorithm}%
\usepackage{algorithmicx}%
\usepackage{algpseudocode}%
\usepackage{listings}%
\usepackage{enumitem}
\usepackage{bm}
\usepackage{float}%
\usepackage{siunitx}
\usepackage{tabularx}
\usepackage{ltablex}
\usepackage{caption}
\usepackage{makecell}
\usepackage{diagbox}

\newcommand\setrow[1]{\gdef\rowmac{#1}#1\ignorespaces}
\newcommand\clearrow{\global\let\rowmac\relax}
\clearrow

\newtheorem{theorem}{Theorem}
\newtheorem{lemma}{Lemma}

\newtheorem{example}{Example}
\newtheorem{definition}{Definition}

\begin{document}
	\title{\large A Sequence-Form Formulation of Logistic Quantal Response Equilibrium in Extensive-Form Games for Selecting Nash Equilibria}
	
	\author[1,3]{\fnm{Yuqing} \sur{Hou}}\email{yuqinghou2-c@my.cityu.edu.hk}
	
	\author*[2]{\fnm{Yiyin} \sur{Cao}}\email{yiyincao2-c@my.cityu.edu.hk}
	
	\author[3]{\fnm{Chuangyin} \sur{Dang}}\email{mecdang@cityu.edu.hk}
	
	\author[1]{\fnm{Yong} \sur{Wang}}\email{yongwang@ustc.edu.cn}
	
	\affil[1]{\orgdiv{Department of Automation}, \orgname{University of Science and Technology of China}, \orgaddress{\city{Hefei}, \country{China}}}
	
	\affil[2]{\orgdiv{School of Management}, \orgname{Xi'an Jiaotong University}, \orgaddress{\city{Xi'an}, \country{China}}}
	
	\affil[3]{\orgdiv{Department of Systems Engineering}, \orgname{City University of Hong Kong}, \orgaddress{\city{Hong Kong}, \country{China}}}
	
	
	\abstract{
		For an extensive-form game, logistic quantal response equilibrium (QRE) is defined with respect to its associated normal form and provides a natural equilibrium-selection mechanism as the rationality parameter tends to infinity. However, direct computation of logistic QRE in the normal form is generally impractical because the strategy space grows exponentially in the number of information sets. To address this difficulty, we construct a dilated-entropy-barrier artificial game in the sequence form and prove that its Nash equilibria characterize the corresponding logistic QREs. Building on this characterization, we further develop a sequence-form formulation of logistic QRE relative to a totally mixed strategy profile. This formulation gives rise to a differentiable path-following method for tracing the associated logit-QRE path, and we establish the existence of the corresponding smooth path. By recasting the dilated-entropy terms as the standard entropy terms, we additionally derive an equivalent smooth path. Numerical experiments illustrate the equilibrium-selection process of the proposed methods and evaluate their computational performance.}
	
	\keywords{Extensive-Form Game, Sequence Form, Nash equilibrium, Logistic Quantal Response Equilibrium, Differentiable Path-Following Method}
	
	\pacs[JEL Classification]{C72}
	
	
	\maketitle
	
	\section{Introduction}
	Extensive-form games~\cite{KuhnExtensiveGames1950} provide a fundamental framework for modeling sequential strategic interactions in economics, political science, and engineering. Nash equilibrium~\cite{NashEquilibriumpointsnperson1950} characterizes strategy profiles from which no player can profitably deviate unilaterally. However, an extensive-form game may possess multiple Nash equilibria, some of which are counterintuitive, thereby limiting the descriptive and predictive power of Nash equilibrium. Logistic QRE, originally introduced by McKelvey and Palfrey\cite{mckelveyQuantalResponseEquilibria1995} for normal-form games, models boundedly rational behavior through payoff-sensitive stochastic choice; when applied to an extensive-form game, it is considered on the game's associated normal-form representation. As the rationality parameter increases, the limiting Nash equilibrium reached along a continuous QRE branch provides a criterion for equilibrium selection. Logistic QRE satisfies the invariance principle~\cite{KohlbergStrategicStabilityEquilibria1986}, as it is invariant across alternative extensive-form games that induce the same reduced normal form. In contrast, logistic agent QRE~\cite{mckelveyQuantalResponseEquilibria1998} is defined by local logit responses at individual information sets and is therefore generally sensitive to the extensive-form structure. These two QRE specifications may therefore induce different QRE paths and select different limiting Nash equilibria. This paper investigates how logistic QRE can be computed in extensive-form games through a sequence-form formulation, thereby enabling the Nash equilibrium selection while avoiding the exponential growth of the normal-form strategy space.
	
	The equilibrium-selection approach associated with logistic QRE admits a natural path-following interpretation~\cite{turocyDynamicHomotopyInterpretation2005}. More precisely, the rationality parameter indexes a continuous solution path whose limit points, as the parameter tends to infinity, are Nash equilibria. Path-following methods for computing Nash equilibria in normal-form games have been widely studied. Their early development can be traced back to the Lemke-Howson complementary-pivoting algorithm for bimatrix games~\cite{LemkeEquilibriumPointsBimatrix1964}, which was subsequently generalized to $n$-player games by Rosenmüller~\cite{RosenmullerGeneralizationLemkeHowson1971} and Wilson~\cite{WilsonComputingEquilibriaNPerson1971}. Subsequent work developed simplicial methods that provided constructive and implementable procedures for computing Nash equilibria in $n$-player games~\cite{GarciaSimplicialApproximationEquilibrium1973,vanderLaanComputationFixedPoints1982,Doupnewsimplicialvariable1987,HeringsComputationNashEquilibrium2002}. However, their reliance on increasingly fine simplicial subdivisions may entail substantial computational and storage costs as the dimension of the strategy space grows, thereby limiting their scalability. These limitations motivated the development of differentiable path-following methods. Herings and Peeters~\cite{Heringsdifferentiablehomotopycompute2001} introduced the first differentiable path-following method for computing Nash equilibria, based on Harsanyi and Selten's linear tracing procedure, an early path-based procedure for Nash equilibrium selection. Govindan and Wilson~\cite{GovindanglobalNewtonmethod2003} proposed a piecewise-differentiable global Newton method that follows a solution path closely parallel to the linear tracing procedure, while Chen and Dang~\cite{Chenreformulationbasedsmoothpathfollowing2016} later attained full differentiability through a modified logarithmic reformulation. Additionally, by exploiting the equilibrium-selection properties embedded in Harsanyi's tracing procedures and logistic QRE, differentiable methods have also proven effective for the computation of a range of alternative equilibrium concepts~\cite{Chenextensionquantalresponse2020,CaovariantHarsanyitracing2022,liArbitraryStartingTracing2020}.
	
	The sequence form~\cite{romanovskiiReductionGameComplete1962,Kollercomplexitytwopersonzerosum1992,vonStengelEfficientComputationBehavior1996} compactly represents extensive-form games by encoding players' strategies through action sequences and realization plans. It preserves the sequential structure of the game while scaling linearly in that of the extensive-form game, making it particularly suitable for computing equilibrium concepts defined in the normal form, whose direct treatment is hindered by the exponential growth of the normal-form strategy space. Nash equilibria are conventionally computed in normal-form games. Koller et al.~\cite{KollerEfficientComputationEquilibria1996} developed an algorithm for computing Nash equilibria of two-player extensive-form games by applying Lemke's algorithm to the linear complementarity problem induced by the sequence-form representation; the practical efficiency of this approach was demonstrated in the Gala system~\cite{KollerRepresentationssolutionsgametheoretic1997}. For $n$-player games, Govindan and Wilson~\cite{govindanStructureTheoremsGame2002} extended structure theorems to perturbed extensive-form games through enabling strategies, which are closely related to sequence-form strategies, and obtained a piecewise-differentiable path-following method for computing Nash equilibria. More recently, Hou et al.~\cite{houSequenceformDifferentiablePathfollowing2025} developed a globally differentiable sequence-form path-following method for Nash equilibrium computation by incorporating logarithmic-barrier terms into the payoff functions. Computational methods based on the sequence form have also been extended to normal-form equilibrium refinements. Extending the method of van den Elzen and Talman~\cite{vandenelzenProcedureFindingNash1991} to the extensive-form game, von Stengel et al.~\cite{vonStengelComputingNormalForm2002} established a piecewise linear sequence-form path for computing normal-form perfect equilibria in two-player games. More recently, Hou et al.~\cite{houSequenceFormCharacterizationDifferentiable2025,houCharacterizationComputationNormalForm2026a} derived sequence-form characterizations of normal-form perfect and proper equilibria and, on this basis, developed differentiable path-following methods applicable to $n$-player games. Related sequence-form methods have also been proposed for equilibrium notions defined through local behavior strategies, including quasi-perfect equilibrium~\cite{MiltersenComputingquasiperfectequilibrium2010,gattiCharacterizationQuasiperfectEquilibria2020}, quasi-proper equilibrium~\cite{hansenComputationalComplexityComputing2021}, and extensive-form perfect equilibrium~\cite{farinaExtensiveFormPerfectEquilibrium2017}.
	
	Nevertheless, existing sequence-form methods do not directly yield a tractable formulation for the logistic QRE. The challenge lies in representing the payoff-dependent logit responses, which are defined over pure strategies prescribing actions at every information sets, in terms of realization-plan variables that encode only sequence weights. To address this challenge, we construct an entropy-barrier artificial game in the sequence form and establish that its Nash equilibria characterize the logistic QREs of the original extensive-form game. The construction captures the normal-form logit-response structure within the sequence form, thereby circumventing explicit expansion of the exponentially large strategy space. It thereby enables efficient computation of both the logistic QRE path and its limiting equilibrium. We establish its theoretical properties and demonstrate its computational performance through numerical experiments. The remainder of this paper is organized as follows. Section~\ref{qre-sec-prm1} introduces the preliminaries on extensive-form games, logistic QRE, and the sequence form. Section~\ref{qre-sec-prm2} presents the sequence-form formulation of logistic QRE. Section~\ref{qresmoothpath} develops a sequence-form differentiable path-following method for tracing the logit-QRE path induced by an arbitrary interior initial point. Section~\ref{equsmp} reformulates the dilated-entropy terms in the sequence-form formulation as weighted standard entropy terms, thereby obtaining an equivalent smooth path. Section~\ref{qre-sec-prm5} reports numerical experiments to demonstrate the effectiveness of the proposed method. Section~\ref{qre-sec-prm6} concludes the paper.
	
	\section{Preliminaries}\label{qre-sec-prm1}
	\begin{table}[tb!]
		\centering
		\caption{Notation for Extensive-Form Games, Normal Form, and Sequence Form}
		\begin{tabular}{ll}
			\toprule
			Symbol & Explanation\\
			\midrule
			$N=\{1,2,\ldots,n\}$ & Set of players\\
			$N_c=N\cup\{c\}$ & Set of players and chance player $c$\\
			$a$ & Action taken by a player\\
			$H$ & Set of histories,  $\emptyset\in H$ and $\langle a_1,\ldots,a_L\rangle\in H$ if $\langle a_1,\ldots,a_K\rangle\in H$ and $L<K$\\
			$Z$ & Set of terminal histories\\
			$A(h)=\{a\mid (h,a)\in H\}$ & Set of actions after a nonterminal history $h$\\
			$P(h)$ & Player who takes an action after $h$\\
			$f_{c}(a|h)$ & Probability that chance player $c$ takes action $a$ after $h$\\
			$-i$ & All non-chance players excluding player $i\in N$\\
			$\mathcal{I}_{i}$ & Collection of information partitions of $\{h\in H\mid P(h)=i\}$\\
			$M_{i}=\{1,\ldots,m_{i}\}$ & Set of information partition indices for player $i\in N_c$\\
			$I^{j}_{i}\in\mathcal{I}_{i},j\in M_{i}$ & $j$th information set of player $i\in N_c$, $A(I^j_i)\triangleq A(h)= A(h')$ whenever $h,h'\in I^j_i$\\
			$\succsim_i$ & Preference relation of player $i\in N$ \\
			$u_z^{i}:Z\to\mathbb{R}$ & Payoff function of player $i\in N$\\
			$R_{i}(h)$ & Record of player $i\in N_c$'s experience along $h$\\
			$|C|$ & Cardinality of a finite set $C$\\
			$m_0=\sum_{i\in N}m_i$ & Number of information sets\\
			$n_0=\sum_{i\in N}\sum_{j\in M_i}|A(I^j_i)|$ & Number of actions for non-chance players\\
			$s^i$ & Pure strategy of player $i$\\
			$S=\underset{i\in N_c}{\prod}S^i$ & Set of pure-strategy profiles\\
			$u^i(s)$ & Expected payoff of player $i$ on the pure-strategy profile $s\in S$\\
			$\sigma^i$ & Mixed strategy of player $i\in N_c$, probability measure over $S^i$\\
			$\Xi=\underset{i\in N}{\prod}\Xi^i$ & Set of mixed-strategy profiles,  $\Xi^i=\{\sigma^i:S^i\to\mathbb{R}_+\mid \sum\limits_{s^i\in S^i}\sigma^{i}(s^i)=1\}$\\
			$\Xi_{++}=\underset{i\in N}{\prod} \Xi^i_{++}$ & Set of strictly positive mixed-strategy profiles\\
			$\varpi^i$ & Sequence of actions taken by player $i$\\
			$\varpi^i_{I^j_i}$ & Sequence of player $i$ leading to $I^j_i$, $\varpi^i_h=\varpi^i_{I^j_i}$ for any $h\in I^j_i$ \\
			$\varpi^i_{I^j_i}a$ & The extended sequence $\varpi^i_{I^j_i}\cup \{a\}$\\
			${W}=\underset{i\in N_c}\prod{W}^i$ & The collection of sequence profiles, $\emptyset\in{W}^i$\\
			$g^i(\varpi)$ & Expected payoff of player $i$ on the sequence profile $\varpi$\\
			$\gamma^i$ & Realization plan of player $i\in N_c$\\
			$\Lambda=\underset{i\in N}\prod{ \Lambda^i}$ & Set of realization-plan profiles\\
			$\Lambda_{++}=\underset{i\in N}\prod{ \Lambda^i_{++}}$ & Set of strictly positive realization-plan profiles\\
			$M_i(\varpi^i)$ & The index set of the information sets for player $i$ with $\varpi^i$ being the sequence\\
			$m_i(\varpi^i)$ & $|M_i(\varpi^i)|$\\
			\bottomrule
		\end{tabular}
		\label{Table0}
	\end{table}
	Following Osborne and Rubinstein~\cite{OsborneCourseGameTheory1994}, an extensive-form game is represented by $\Gamma=\langle N, H, P, f_c, \{{\cal I}_i\}_{i\in N}, \{\succsim_i\}_{i\in N}\rangle$, where the notation is summarized in Table~\ref{Table0}.Throughout this paper, we consider finite extensive-form games with perfect recall. Finiteness means that the set of histories, $H$, is finite. Perfect recall requires that, for each player $i\in N_c$, any two histories $h$ and $h'$ belonging to the same information set of player $i$ satisfy $R_i(h)=R_i(h')$. 
	
	The normal-form representation of $\Gamma$ is expressed as $\Gamma_n=\langle N, S, \sigma^c, \{u^i\}_{i\in N}\rangle$, with the associated notation summarized in Table~\ref{Table0}. For each player $i\in N_c$, a pure strategy is a function $s^i$ that assigns an action in $A(I_i^j)$ to every information set $I_i^j$, $j\in M_i$. Consequently, the number of such pure strategies is $\prod_{j\in M_i}|A(I_i^j)|$, which grow exponentially in the number of information sets. We instead consider the more compact reduced normal form, in which a pure strategy $s^i$ prescribes an action at $I_i^j$ only when that information set is reachable under its preceding prescriptions. Nevertheless, the number of pure strategies in the reduced normal form still grow exponentially with the number of parallel information sets. To facilitate computation, given a pure strategy $s^i$ of player $i\in N_c$, let $s^i(a)$ equal $1$ if $s^i$ prescribes action $a$, and $0$ otherwise. Then, for any pure-strategy profile \(s=(s^i:i\in N_c)\), the payoff of player \(i\in N\) is give by $u^i(s)=\sum_{h=\langle a_1,\ldots,a_L\rangle\in Z}u^i_z(h)\prod_{q=0}^{L-1}s^{P(\langle a_1,\ldots,a_q\rangle)}(a_{q+1})$. Given a mixed-strategy profile $\sigma=(\sigma^i:i\in N)\in \Xi$, the expected payoff of player $i\in N$ is $u^i(\sigma)=\sum_{s^i\in S^i}\sigma^i(s^i)u^i(s^i,\sigma^{-i})$ with $u^i(s^i,\sigma^{-i})=\sum_{s^{-i}\in S^{-i}}u^i(s^i,s^{-i})\prod_{i_q\in N_c\backslash \{i\}}\sigma^{i_q}(s^{i_q})$.
    \begin{definition}\label{nedefinition}
    	A mixed-strategy profile $\sigma^*$ is a Nash equilibrium if, for every player $i\in N$ and $s^i\in S^i$, it holds that $\sigma^{*i}(s^i)=0$ whenever $u^i(s^i,\sigma^{*-i})< u^i(\tilde s^i,\sigma^{*-i})$ for some $\tilde s^i\in S^i$.
    \end{definition}
    To accommodate deviations from exact best-response behavior, McKelvey and Palfrey~\cite{mckelveyQuantalResponseEquilibria1995} introduced the logistic QRE, which models players' choices as payoff-sensitive probabilistic responses.
    \begin{definition}\label{defqre}
    	For any given rationality parameter $\lambda \geq 0$, $\sigma(\lambda)\in\Xi$ is a logistic QRE if it satisfies
    	\begin{equation}\label{qrenfne}
    		\sigma^i(\lambda;s^i)
    		=
    		\frac{\exp\left(\lambda u^i(s^i,\sigma^{-i}(\lambda))\right)}
    		{\sum\limits_{s^i_q\in S^i}\exp\left(\lambda u^i(s^i_q,\sigma^{-i}(\lambda))\right)},\; i\in N,s^i\in S^i.
    	\end{equation}
    \end{definition}
    To represent the equilibrium-selection process induced by logistic QRE, we introduce an entropy-barrier normal-form game \(\Gamma_n^{e}(t)\), for \(t\in(0,1]\). Each player $i$ determines an optimal response to a prescribed strategy $\hat\sigma\in\Xi$ by solving the convex optimization problem, 
    \begin{equation}
    	\label{nfopt:etne}
    	\begin{aligned}
    		\max\limits_{\sigma^i} 
    		& \quad (1-t) \sum\limits_{s^i\in S^i}\sigma^i(s^i) \, u^i(s^i, \hat{\sigma}^{-i})- t \sum\limits_{s^i\in S^i} \sigma^i(s^i) \ln \sigma^i(s^i) \\
    		\text{s.t.}
    		& \quad \sum\limits_{s^i\in S^i}\sigma^i(s^i) - 1 = 0.
    	\end{aligned}
    \end{equation}
    The first-order optimality conditions of \eqref{nfopt:etne}, together with the fixed-point condition \(\hat{\sigma}=\sigma\), yield
    \begin{equation}
    	\label{nfeqs:etne}
    	\begin{aligned}
    		& (1-t) u^i(s^i, \sigma^{-i})- t \ln \sigma^i(s^i)- t -\nu^i=0,\; i\in N,s^i\in S^i,\\
    		& \sum\limits_{s^i\in S^i}\sigma^i(s^i) - 1 = 0,\; i\in N,\;\sigma^i(s^i)>0,\; i\in N,s^i\in S^i.
    	\end{aligned}
    \end{equation}
    For $t\in(0,1]$, define the strictly decreasing function $\lambda(t)=(1-t)/t$, which satisfies $\lambda(1)=0$ and $\lim_{t\to 0^+}\lambda(t)=+\infty$. Then, a mixed-strategy profile $\sigma\in\Xi$ satisfies the logit-QRE condition in \eqref{qrenfne} with $\lambda(t)$ if and only if there exists $\nu=(\nu^i:i\in N)$ such that $(\sigma,\nu)$ solves System~\eqref{nfeqs:etne}. Consequently, $\sigma$ is a Nash equilibrium of the entropy-barrier game $\Gamma_n^{e}(t)$ if and only if it is a logistic QRE of the original game $\Gamma$. Accordingly, System~\eqref{nfeqs:etne} defines a logit-QRE path that originates from the uniform mixed-strategy profile at $t=1$ and converges, as $t\to 0$, to a Nash equilibrium of $\Gamma$. Nevertheless, the number of variables and constraints in System~\eqref{nfeqs:etne} grows exponentially with the size of the extensive-form game, rendering the computation of this path intractable in general.
    \begin{minipage}{0.40\textwidth}
    	\centering
    	\includegraphics[width=0.88\textwidth]{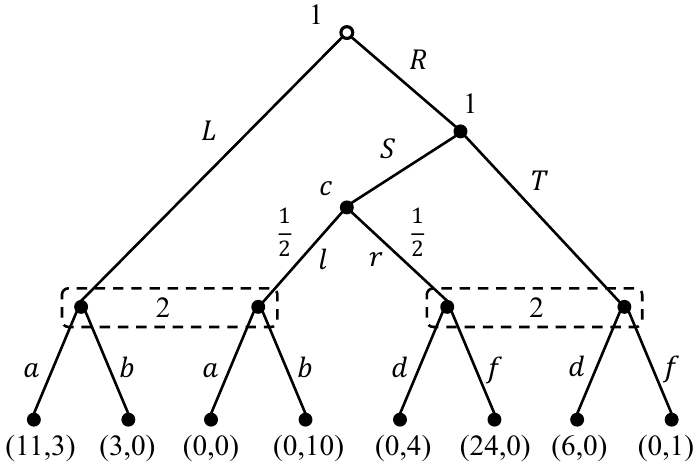}
    	\captionof{figure}{An extensive-form game from von Stengel et al.~\cite{vonStengelComputingNormalForm2002}}
    	\label{Fig01}
    \end{minipage}\hfill
    \begin{minipage}{0.56\textwidth}
    	\centering
    	\captionof{table}{Reduced Normal Form of Fig.~\ref{Fig01}}
    	\label{Table1}
    	\fontsize{8.5pt}{11pt}\selectfont
    	\begin{tabular}{lccc}
    		\toprule
    		& \multicolumn{3}{c}{Player 1} \\
    		\cmidrule(lr){2-4}
    		Player 2 
    		& \(s_1^1=\{L\}\) 
    		& \(s_2^1=\{R,S\}\) 
    		& \(s_3^1=\{R,T\}\) \\
    		\midrule
    		\(s_1^2=\{a,d\}\) & \((11,3)\) & \((0,2)\)  & \((6,0)\) \\
    		\(s_2^2=\{a,f\}\) & \((11,3)\) & \((12,0)\) & \((0,1)\) \\
    		\(s_3^2=\{b,d\}\) & \((3,0)\)  & \((0,7)\)  & \((6,0)\) \\
    		\(s_4^2=\{b,f\}\) & \((3,0)\)  & \((12,5)\) & \((0,1)\) \\
    		\bottomrule
    	\end{tabular}
    \end{minipage}
    \begin{example}\label{ch2:exm:extnote0} {\em
    		Consider an extensive-form game $\Gamma$ shown in Fig.~\ref{Fig01}, which is the game in Fig. 1 of von Stengel et al.~\cite{vonStengelComputingNormalForm2002}. The players' information sets are given by \( \mathcal{I}_1=\{I^1_1,I^2_1\} \), \( \mathcal{I}_2=\{I^1_2,I^2_2\} \), and \( \mathcal{I}_c=\{I^1_c\} \), where \( I^1_1=\{\emptyset\} \), \( I^2_1=\{\langle R\rangle\} \), \( I^1_2=\{\langle L\rangle,\langle R,S,l\rangle\} \), \( I^2_2=\{\langle R,S,r\rangle,\langle R,T\rangle\} \), and \( I^1_c=\{\langle R,S\rangle\} \). The pure strategies of the chance player are $s^c_1=\{l\},s^c_2=\{r\}$. The mixed strategy of the chance player is fixed, given by $\sigma^c=(\sigma^c(s^c_1),s^c_1(s^c_2))=(0.5,0.5)$. In the normal-form representation, the effect of chance can be incorporated directly into the payoff computation, thereby simplifying the analysis of pure-strategy profiles of the players. The normal-form representation of the extensive-form game can be summarized in Tab.~\ref{Table1}. The corresponding mixed strategies are probability measures 
    		$\sigma^1=(\sigma^1(s^1_1),\sigma^1(s^1_2),\sigma^1(s^1_3))^\top$, $\sigma^2=(\sigma^1(s^2_1),\sigma^1(s^2_2),\sigma^1(s^2_3),\sigma^1(s^2_4))^\top$. Based on Definition~\ref{nedefinition}, the Nash equilibria of the game can be derived manually. This game exhibits three distinct types of Nash equilibria, classified according to their final expected payoffs $u(\sigma)=(u^1(\sigma),u^2(\sigma))$.
    		\begin{itemize}
    			\item \textbf{Type A:} $\sigma^1 = (1,0,0)^\top$, $\sigma^2 = (\sigma^2(s^2_1), 1 - \sigma^2(s^2_1),0,0)^\top$ with $\frac{1}{12} \le \sigma^2(s^2_1) \le 1$; payoff $u(\sigma)=(11,3)$.
    			\item \textbf{Type B:} $\sigma^1 = (0,\frac{1}{3},\frac{2}{3})^\top$, $\sigma^2 = (0,0,\frac{2}{3},\frac{1}{3})^\top$; payoff $u(\sigma)=(4,\frac{7}{3})$.
    			\item \textbf{Type C:} $\sigma^1 = (\frac{5}{14}, \tfrac{3}{14}, \tfrac{3}{7})^\top$, $\sigma^2 = (\tfrac{1}{12}, \tfrac{1}{24},\tfrac{7}{12}, \tfrac{7}{24})^\top$; payoff $u(\sigma)=(4,\frac{3}{2})$.
    		\end{itemize}
    		Figs.~\ref{Fig02}--\ref{Fig03} illustrate that the logistic QRE path and the logistic agent QRE path induce different equilibrium-selection processes and converge to different limiting Nash equilibria of Type A. It should be noted that logistic agent QRE is originally defined in the behavioral-strategy space, where a behavioral strategy assigns probabilities locally to the available actions at each information set. For comparison, we map the logistic agent QRE path to a realization-equivalent mixed-strategy representation. At $t=1$, where $\lambda(t)=0$, logistic agent QRE assigns equal probability to the available actions at each information set, generally inducing a mixed-strategy profile different from the uniform initial profile of logistic QRE. Hence, the two paths may start from different points. Even when their initial points are aligned using the generalized construction in Subsection~\ref{qresmoothpath}, the distinct response mechanisms may still generate different paths and limiting Nash equilibria.

    		\begin{figure}[htp]
    			\centering
    			\begin{minipage}[b]{0.49\textwidth}
    				\centering
    				\includegraphics[width=1\textwidth, height=0.20\textheight]{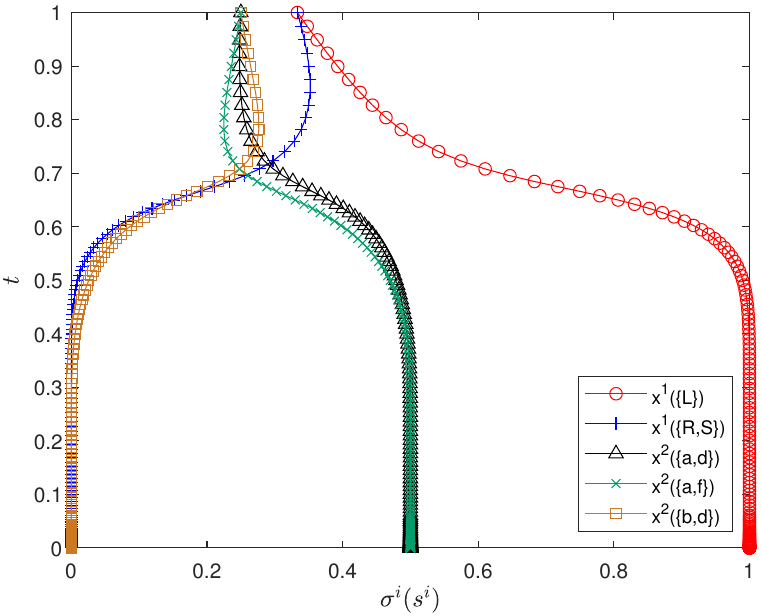}
    				\caption{\label{Fig02}{\footnotesize The Logistic QRE Path for the Game in Fig.~\ref{Fig01}}} \end{minipage}\hfill
    			\begin{minipage}[b]{0.49\textwidth}
    				\centering
    				\includegraphics[width=1\textwidth, height=0.20\textheight]{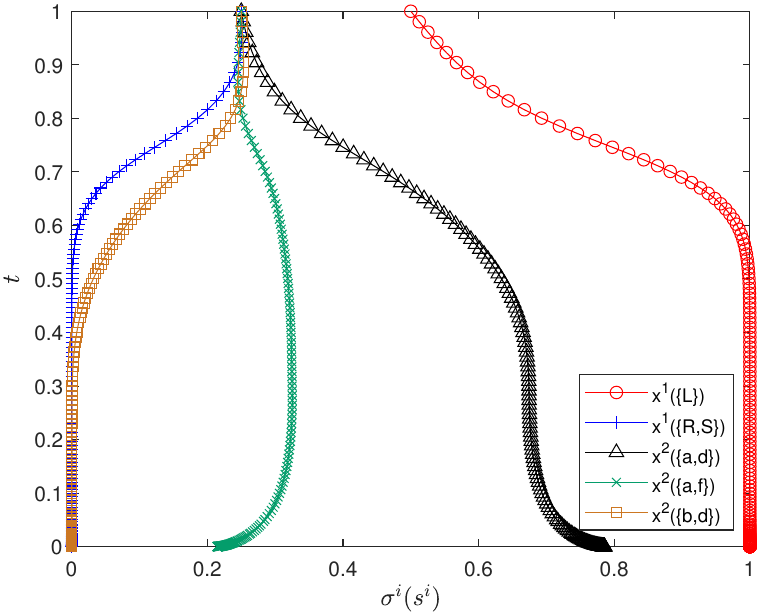}
    				\caption{\label{Fig03}{\footnotesize The Logistic Agent QRE Path for the Game in Fig.~\ref{Fig01}}} \end{minipage}
    		\end{figure}
    	}
    \end{example}
    
    The sequence form, formally developed by von Stengel~\cite{vonStengelEfficientComputationBehavior1996}, replaces pure strategies with action sequences and thereby provides a compact representation. The sequence-form representation of $\Gamma$ is denoted by $\Gamma_s=\langle N,W,\gamma^c,\{g^i\}_{i\in N}\rangle$ with the relevant notation summarized in Table~\ref{Table0}. We say that $\varpi=(\varpi^i:i\in N_c)\in W$ is induced by a history $h\in H$ if $\varpi^i=\varpi_h^i$ for every $i\in N_c$. For each player $i\in N$, the payoff function $g^i$ is defined by $g^i(\varpi)=u_z^i(h)$ if $\varpi$ is induced by a terminal history $h\in Z$, and $g^i(\varpi)=0$ otherwise. For each player $i\in N_c$, a realization plan in the sequence form is a function $\gamma^i$ defined on ${W}^i$ satisfying $\gamma^i(\emptyset)=1$ and the following flow constraints,
	\begin{equation}\label{qre-equ-pre3}
		\begin{aligned}
			& \sum\limits_{a\in A(I_i^j)}\gamma^i(\varpi_{I_i^j}^i a)-\gamma^i(\varpi_{I_i^j}^i)=0,\; j\in M_i,\\
			& 0\le \gamma^i(\varpi_{I_i^j}^i a),\; j\in M_i,a\in A(I_i^j).
		\end{aligned}
	\end{equation}
	Given a realization-plan profile $\gamma=(\gamma^i:i\in N_c)$, the expected payoff of player $i\in N$ is $g^i(\gamma)=\sum_{\varpi^i\in W^i}\gamma^i(\varpi^i)g^i(\varpi^i,\gamma^{-i})$, where $g^i(\varpi^i,\gamma^{-i})=\sum_{\varpi^{-i}\in {W}^{-i}}g^i(\varpi^i,\varpi^{-i})\prod_{i_q\ne i}\gamma^{i_q}(\varpi^{i_q})$. The number of sequences available to player $i\in N$ is $\sum_{j \in M_i} |A(I_i^j)|+1$, and hence grows linearly with the number of information sets. The sequence-form representation of the extensive-form game in Fig.~\ref{Fig01} is presented in Table~\ref{Table2}. The corresponding realization plan satisfies \eqref{qre-equ-pre3}. 
	\begin{table}[!ht]
		\centering
		\caption{Sequence Form of Fig.~\ref{Fig01}\label{Table2}}
		\begin{tabular}{lccccc}
			\toprule
			& \multicolumn{5}{c}{Player 1 Sequences} \\
			\cmidrule(lr){2-6}
			Player 2 sequences
			& $\emptyset$ 
			& $\varpi^1_{I^1_1}L$ 
			& $\varpi^1_{I^1_1}R$ 
			& $\varpi^1_{I^2_1}S$ 
			& $\varpi^1_{I^2_1}T$ \\
			\midrule
			$\emptyset$           & (0,0) & (0,0)           & (0,0) & (0,0)           & (0,0) \\
			$\varpi^2_{I^1_2}a$   & (0,0) & \textbf{(11,3)} & (0,0) & \textbf{(0,0)}  & (0,0) \\
			$\varpi^2_{I^1_2}b$   & (0,0) & \textbf{(3,0)}  & (0,0) & \textbf{(0,5)}  & (0,0) \\
			$\varpi^2_{I^2_2}d$   & (0,0) & (0,0)           & (0,0) & \textbf{(0,2)}  & \textbf{(6,0)} \\
			$\varpi^2_{I^2_2}f$   & (0,0) & (0,0)           & (0,0) & \textbf{(12,0)} & \textbf{(0,1)} \\
			\bottomrule
		\end{tabular}
	\end{table}
	\section{A Sequence-Form Formulation of Logistic QRE}\label{qre-sec-prm2}
	
	Consider an extensive-form game $\Gamma$, with $\Gamma_n$ denoting its normal form and $\Gamma_s$ its sequence form. Given any pure strategy $s^i\in S^i$ of player $i\in N_c$, define $s^i(\varpi^i)=\prod_{a\in \varpi^i}s^i(a)$ for $\varpi^i\in W^i$. For any $\sigma\in \Xi$, let $\gamma(\sigma)=(\gamma^{i}(\sigma^i;\varpi^i):i\in N_c,\varpi^i\in W^i)$, where $\gamma^{i}(\sigma^i;\varpi^i)=\sum_{s^i\in S^i}s^i(\varpi^i)\sigma^i(s^i),\,i\in N_c,\varpi^i\in W^i$. It follows that $\gamma^{i}(s^i;\varpi^i)=s^i(\varpi^i)$ and $\gamma(\sigma)\in\Lambda$. Define $T=\left\{(\sigma,\gamma)\left|\sigma\in  \Xi,\gamma=\gamma(\sigma)\right.\right\}$, which leads to the following conclusions. For any $\gamma\in\Lambda$, there exists $\sigma\in\Xi$ such that $(\sigma,\gamma)\in T$. In particular, for any \(\gamma \in \Lambda_{++}\), one such mixed-strategy profile is given by $\sigma(\gamma)=(\sigma^i(\gamma^i): i\in N_c)$ with \begin{equation}\label{gamma2sigma}\sigma^i(\gamma^i;s^i)	= \prod_{j\in M_i,\, a\in A(I^j_i),\,s^i(a)=1}\frac{\gamma^i(\varpi^i_{I^j_i}a)}{\gamma^i(\varpi^i_{I^j_i})}.\end{equation}
	Moreover, if $(\sigma,\gamma)\in T$, then $u^i(\sigma)=g^i(\gamma)$ for every player $i\in N$. Detailed proofs of these results can be found in Hou et al.~\cite{houSequenceFormCharacterizationDifferentiable2025}. 
	
	Although multiple mixed-strategy profiles may induce the same realization-plan profile, this cannot occur for distinct logistic QREs: each logistic QRE is uniquely determined by its induced realization plan. The following lemma establishes this property, which provides the basis for formulating logistic QRE directly in the sequence form.
	
	\begin{lemma}\label{lemqre}
		Let $(\sigma^*,\gamma^*)\in T$. If $\sigma^*$ is a logistic QRE, then $\sigma^* = \sigma(\gamma^*)$.
	\end{lemma}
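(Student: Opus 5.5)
The plan is to show that a logistic QRE $\sigma^*$ has a product structure across information sets, and that the factors coincide with the conditional ratios $\gamma^{*i}(\varpi^i_{I^j_i}a)/\gamma^{*i}(\varpi^i_{I^j_i})$. This identifies $\sigma^*$ with $\sigma(\gamma^*)$ in \eqref{gamma2sigma}.

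First, $\sigma^*$ is totally mixed by \eqref{qrenfne}, since every exponential is positive. Hence $\gamma^*=\gamma(\sigma^*)\in\Lambda_{++}$ and $\sigma(\gamma^*)$ is well defined. We work in the reduced normal form, so each $s^i$ prescribes actions exactly at the information sets it reaches.

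The key step is to decompose the payoff against the fixed opponents' profile $\sigma^{*-i}$, with chance included. Write
\[
u^i(s^i,\sigma^{*-i})=\sum_{\varpi^i\in W^i} s^i(\varpi^i)\, g^i(\varpi^i,\gamma^{*-i}),
\]
which follows from the formula for $u^i(s)$ by grouping terminal histories according to player $i$'s sequence. We then use the tree structure of player $i$'s sequences under perfect recall. A reduced pure strategy corresponds to choosing one action at each reached information set, and the set of reached information sets is determined recursively. Hence $\sum_{\varpi^i} s^i(\varpi^i)c(\varpi^i)$ is additive over the chosen actions: it equals the sum, over reached $I^j_i$, of $c(\varpi^i_{I^j_i}a)$ with $a$ the prescribed action.

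Next, define recursively from the leaves a "log-partition" value for each sequence $\varpi^i$:
\[
V(\varpi^i)=\sum_{j\in M_i(\varpi^i)}\ln\sum_{a\in A(I^j_i)}\exp\bigl(\lambda g^i(\varpi^i_{I^j_i}a,\gamma^{*-i})+V(\varpi^i_{I^j_i}a)\bigr).
\]
A standard induction over the tree shows that $\sum_{s^i}\exp(\lambda u^i(s^i,\sigma^{*-i}))=\exp(V(\emptyset))$. The same induction shows that $\sigma^{*i}(s^i)$ factorizes as a product over reached information sets of the local logit probabilities
\[
\beta(a\mid I^j_i)=\frac{\exp\bigl(\lambda g^i(\varpi^i_{I^j_i}a,\gamma^{*-i})+V(\varpi^i_{I^j_i}a)\bigr)}{\sum_{a'}\exp\bigl(\lambda g^i(\varpi^i_{I^j_i}a',\gamma^{*-i})+V(\varpi^i_{I^j_i}a')\bigr)}.
\]
The mechanism is that the exponential of an additive sum becomes a product, and summing over the choices below each action produces $e^{V}$. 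Because $\sum_{a}\beta(a\mid I^j_i)=1$ at every information set, summing this product over all $s^i$ with $s^i(\varpi^i)=1$ gives $\gamma^{*i}(\varpi^i)=\prod_{a\in\varpi^i}\beta(a\mid\cdot)$. Therefore $\gamma^{*i}(\varpi^i_{I^j_i}a)/\gamma^{*i}(\varpi^i_{I^j_i})=\beta(a\mid I^j_i)$, and substituting into \eqref{gamma2sigma} gives $\sigma^{i}(\gamma^{*i};s^i)=\sigma^{*i}(s^i)$ for every $i\in N$. Chance is unaffected, since $\sigma^c$ is fixed and already of this product form.

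The main obstacle is the bookkeeping in the factorization step. One must verify that the reduced-strategy space is exactly the product structure generated recursively by parallel information sets $M_i(\varpi^i)$ following each sequence, which relies on perfect recall. One must also check that the additive decomposition of $u^i$ matches the recursion without double counting. I would handle this with a careful induction on the depth of player $i$'s sequence tree, proving both identities simultaneously for the subtree rooted at each $\varpi^i$.
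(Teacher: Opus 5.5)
Your proposal is correct and follows essentially the same route as the paper. Your $\exp\bigl(\lambda g^i(\varpi^i,\gamma^{*-i})+V(\varpi^i)\bigr)$ is exactly the paper's recursive $\mathcal{Z}^i(\gamma^*;\varpi^i)$, and your $\beta(a\mid I^j_i)$ is the paper's ratio $\mathcal{Z}^i(\gamma^*;\varpi^i_{I^j_i}a)/\mathcal{Z}^i(\gamma^*;I^j_i)$; the only difference is that you factorize $\sigma^{*i}$ first and sum to get $\gamma^{*i}$, whereas the paper computes $\gamma^{*i}(\varpi^i_{I^j_i}a)/\gamma^{*i}(\varpi^i_{I^j_i})$ directly and then telescopes.

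One small slip: the decomposition of $u^i(s^i,\sigma^{*-i})$ also contains the term $g^i(\emptyset,\gamma^{*-i})$ (terminal histories where player $i$ never moves), so the total partition sum is $\exp\bigl(\lambda g^i(\emptyset,\gamma^{*-i})+V(\emptyset)\bigr)$ rather than $\exp(V(\emptyset))$, but this constant cancels in the logit ratios and does not affect your conclusion.
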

	\begin{proof}
		Fix an arbitrary player $i\in N$. For any $\gamma\in\Lambda_{++}$, define recursively
		\begin{equation}\label{recurge0}
			\begin{aligned}
				&\mathcal{Z}^i(\gamma;\varpi^i)=\exp\left(\lambda g^i(\varpi^i,\gamma^{-i})\right)\prod\limits_{j\in M_i(\varpi^i)}\mathcal{Z}^i(\gamma;I^j_i),\; \varpi^i\in W^i,\\
				&\mathcal{Z}^i(\gamma;I^j_i) = \sum\limits_{a\in A(I^j_i)}\mathcal{Z}^i(\gamma;\varpi^i_{I^j_i}a),\; j\in M_i.
			\end{aligned}
		\end{equation}
		To clarify the proof, for any $\varpi^i \in W^i$, we define $s^i_{\varpi^i}$ as a $\varpi^i$-partial pure strategy that assigns to each information set along the sequence $\varpi^i$ the corresponding action $a \in \varpi^i$, and assigns an action to every information set reachable after $\varpi^i$. Let $S^i_{\varpi^i}$ denote the set of all such $\varpi^i$-partial pure strategies. In particular, $S^i_{\varpi^i_\emptyset}=S^i$. By the recursive definition~\eqref{recurge0}, we obtain
		\begin{equation}\label{seqpartpure}
		\mathcal{Z}^i(\gamma;\varpi^i) = \sum\limits_{s^i_{\varpi^i}\in S^i_{\varpi^i}}\exp\left(\lambda\sum\limits_{\varpi^i_q\in W^i:s^i_{\varpi^i}(\varpi^i_q)=1}g^i(\varpi^i_q,\gamma^{-i})\right).
		\end{equation}
		Furthermore, it follows from $u^i(s^i,\sigma^{-i}(\gamma^{-i}))=\sum_{\varpi^i\in W^i:s^i(\varpi^i)=1}g^i(\varpi^i,\gamma^{-i})$ that $\mathcal{Z}^i(\gamma;\varpi^i_\emptyset) = \sum_{s^i_q\in S_i}\exp\left(\lambda u_i(s^i_q,\sigma^{-i}(\gamma^{-i}))\right)$. 
        
        Since $(\sigma^*,\gamma^*)\in T$, the realization plan induced by $\sigma^{*i}$ satisfies
		\begin{equation}\label{eqtransform}
			\gamma^{*i}(\varpi^i)=\sum_{s^i\in S^i:s^i(\varpi^i)=1}\sigma^{*i}(s^i),\; \varpi^i\in W^i.
		\end{equation}
		Suppose that $\sigma^*$ is a logistic QRE with the rationality parameter $\lambda$. Then, for every $s^i\in S^i$, we have
		\begin{equation}\label{srtqredef}
			\sigma^{*i}(s^i)=\frac{\exp\left(\lambda u^i(s^i,\sigma^{*-i})\right)}
		{\sum\limits_{s^i_q\in S^i}\exp\left(\lambda u^i(s^i_q,\sigma^{*-i})\right)}.
		\end{equation}
		Now consider any information set $I^j_i$ and any action $a\in A(I^j_i)$. It follows from \eqref{eqtransform} and \eqref{srtqredef} that
		\[
		\frac{\gamma^{*i}(\varpi^i_{I^j_i}a)}{\gamma^{*i}(\varpi^i_{I^j_i})} = \frac{\sum\limits_{s^i_q\in S^i:s^i_q(\varpi^i_{I^j_i}a)=1}\exp\left(\lambda u^i(s^i_q,\sigma^{*-i})\right)}{\sum\limits_{s^i_q\in S^i:s^i_q(\varpi^i_{I^j_i})=1}\exp\left(\lambda u^i(s^i_q,\sigma^{*-i})\right)}= \frac{\mathcal{Z}^i(\gamma^{*};\varpi^i_{I^j_i}a)}{\mathcal{Z}^i(\gamma^{*};I^j_i)}.
		\]
		The second equality follows from~\eqref{seqpartpure} and the fact that $u^i(s^i_q,\sigma^{*-i})=\sum_{\varpi^i\in W^i:s^i_q(\varpi^i)=1}g^i(\varpi^i,\gamma^{*-i})$ for every $s^i_q\in S^i$. Building on the preceding results, we next show that $\sigma^i(\gamma^{*i})$ coincides with $\sigma^{*i}$. For any $s^i\in S^i$, we have
		\begin{equation}\label{lem1deriveqre}
			\begin{aligned}
				\sigma^i(\gamma^{*i};s^i)	& = \prod_{j\in M_i,a\in A(I^j_i): s^i(\varpi^i_{I^j_i}a)=1}\frac{\gamma^{*i}(\varpi^i_{I^j_i}a)}{\gamma^{*i}(\varpi^i_{I^j_i})} = \prod_{j\in M_i, a\in A(I^j_i):s^i(\varpi^i_{I^j_i}a)=1}\frac{\mathcal{Z}^i(\gamma^{*};\varpi^i_{I^j_i}a)}{\mathcal{Z}^i(\gamma^{*};I^j_i)}\\
				& =\frac{\exp\left(\lambda \sum\limits_{\varpi^i_q\in W^i:s^i(\varpi^i_q)=1}g^i(\varpi^i_q,\gamma^{*-i})\right)}
				{\mathcal{Z}^i(,\gamma^{*};\varpi^i_\emptyset)} = \frac{\exp\left(\lambda u^i(s^i,\sigma^{*-i})\right)}
				{\sum\limits_{s^i_q\in S^i}\exp\left(\lambda u^i(s^i_q,\sigma^{*-i})\right)}=\sigma^{*i}(s^i).
			\end{aligned}
		\end{equation}
		Finally, we conclude that $\sigma^*=\sigma(\gamma^*)$. This completes the proof.
	\end{proof}
	Using the correspondence in~\eqref{gamma2sigma}, the entropy term in
	\eqref{nfopt:etne} can be expressed equivalently in terms of realization
	plans as follows 
	\[\begin{aligned}
		\sum\limits_{s^i\in S^i} \sigma^i(\gamma^i;s^i) \ln \sigma^i(\gamma^i;s^i)
		& = \sum\limits_{j \in M_i} \sum\limits_{a \in A(I_i^j)}\left(\sum\limits_{s^i\in S^i,s^i(\varpi^i_{I_i^j} a)=1} \sigma^i(\gamma^i;s^i)\ln \frac{\gamma^i(\varpi^i_{I^j_i}a)}{\gamma^i(\varpi^i_{I^j_i})}\right)\\
		& = \sum\limits_{j \in M_i} \sum\limits_{a \in A(I_i^j)}\gamma^i(\varpi^i_{I^j_i}a)\left(\ln \gamma^i(\varpi^i_{I^j_i}a)-\ln \gamma^i(\varpi^i_{I^j_i})\right).
	\end{aligned}\]
	This gives rise to the dilated-entropy-barrier game $\Gamma_s^{e}(t)$, in which,
	given a prescribed realization-plan profile $\hat\gamma\in\Lambda$, each player
	$i$ determines an optimal response by solving the following optimization problem 
	\begin{equation}
		\label{opt:etne}
		\begin{aligned}
			\max\limits_{\gamma^i} 
			& \quad(1-t) \sum\limits_{j\in M_i} \sum\limits_{a \in A(I_i^j)} 
			\gamma^i(\varpi^i_{I_i^j} a) \, g^i(\varpi^i_{I_i^j} a, \hat{\gamma}^{-i}) \\
			& \quad- t \sum\limits_{j \in M_i} \sum\limits_{a \in A(I_i^j)} 
			\gamma^i(\varpi^i_{I_i^j} a) 
			\bigl( \ln \gamma^i(\varpi^i_{I_i^j} a) - \ln \gamma^i(\varpi^i_{I_i^j})\bigr) \\
			\text{s.t.}
			& \quad\sum\limits_{a \in A(I_i^j)} \gamma^i(\varpi^i_{I_i^j} a) - \gamma^i(\varpi^i_{I_i^j}) = 0, \; j \in M_i.
		\end{aligned}
	\end{equation}
	The term ``dilated entropy'' refers to applying the dilation operation to the standard entropy term associated with each sequence, using the realization weight of its parent sequence as the scaling variable. In accordance with the Nash equilibrium principle, we define $\gamma^*$ as a Nash equilibrium of $\Gamma_s^{e}(t)$ precisely when $\gamma^*$ individually solves Problem~\eqref{opt:etne} against $\gamma^{*-i}$ for every player $i\in N$. The preceding construction, together with Lemma~\ref{lemqre}, establishes the following equivalence.
	\begin{theorem}
		Let $(\sigma^*,\gamma^*)\in T$ and $\sigma^*=\sigma(\gamma^*)$. $\gamma^*$ is a Nash equilibrium of $\Gamma_s^{e}(t)$ if and only if $\sigma^*$ is a logistic QRE with the rationality parameter $\lambda(t)$.
	\end{theorem}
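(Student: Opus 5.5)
The plan is to characterize Nash equilibria of $\Gamma_s^{e}(t)$ through the first-order conditions of Problem~\eqref{opt:etne}, solve those conditions in closed form by backward recursion over player $i$'s information sets, and then match the result with the recursion $\mathcal{Z}^i$ from the proof of Lemma~\ref{lemqre}.

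First, for fixed $\hat\gamma^{-i}$, Problem~\eqref{opt:etne} is a concave maximization over a polytope. The dilated entropy $\sum_a \gamma(\varpi a)(\ln\gamma(\varpi a)-\ln\gamma(\varpi))$ is a perspective of the convex function $x\ln x$, so it is jointly convex. Its gradient blows up at the boundary, so every maximizer lies in $\Lambda^i_{++}$. Hence $\gamma^{*i}$ is a best response if and only if it satisfies the KKT conditions with multipliers $\mu^i_j$ on the flow constraints. Differentiating with respect to $\gamma^i(\varpi^i_{I^j_i}a)$ gives, for each $j$ and $a$,
\[
(1-t)g^i(\varpi^i_{I^j_i}a,\hat\gamma^{-i}) - t\ln\frac{\gamma^i(\varpi^i_{I^j_i}a)}{\gamma^i(\varpi^i_{I^j_i})} - t + t\sum_{k\in M_i(\varpi^i_{I^j_i}a)}\frac{\gamma^i(\varpi^i_{I^k_i}b)\text{-sums}}{\gamma^i(\varpi^i_{I^j_i}a)} - \mu^i_j + \sum_{k\in M_i(\varpi^i_{I^j_i}a)}\mu^i_k = 0 .
\]
The child-sum term is simply $t\,m_i(\varpi^i_{I^j_i}a)$, by the flow constraint.

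Second, I solve this system bottom-up over the information-set tree. At terminal information sets (no successors), the condition gives the local logit ratio $\gamma(\varpi a)/\gamma(\varpi)\propto\exp(\lambda g^i(\varpi a,\hat\gamma^{-i}))$, with normalizer $\exp(\mu^i_j/t+\text{const})$. By induction, I will show that $\mu^i_k = t\ln\mathcal{Z}^i(\hat\gamma;I^k_i)+c_k$ for constants $c_k$ absorbing the $-t$ and $t\,m_i$ terms. These constants must cancel, and checking that cancellation is the bookkeeping step. Substituting upward then yields
\[
\frac{\gamma^i(\varpi^i_{I^j_i}a)}{\gamma^i(\varpi^i_{I^j_i})}=\frac{\mathcal{Z}^i(\hat\gamma;\varpi^i_{I^j_i}a)}{\mathcal{Z}^i(\hat\gamma;I^j_i)}
\]
with $\lambda=\lambda(t)$. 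Conversely, this $\gamma^i$ with the induced $\mu^i$ satisfies KKT, so it is the unique best response, by strict concavity in the local ratios.

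Third, I combine this with the fixed point $\hat\gamma=\gamma^*$. If $\gamma^*$ is an equilibrium, then by \eqref{gamma2sigma} the mixed strategy $\sigma^{*i}=\sigma^i(\gamma^{*i})$ is the product of these ratios. The telescoping computation \eqref{lem1deriveqre}, together with \eqref{seqpartpure} and $u^i(s^i,\sigma^{*-i})=\sum_{\varpi^i: s^i(\varpi^i)=1}g^i(\varpi^i,\gamma^{*-i})$ (valid since $(\sigma^*,\gamma^*)\in T$ gives $\sigma^{*-i}$ realization-equivalent to $\gamma^{*-i}$), gives exactly \eqref{qrenfne}. Conversely, if $\sigma^*$ is a logistic QRE, the proof of Lemma~\ref{lemqre} already shows that $\gamma^*$ has the ratios $\mathcal{Z}^i(\gamma^*;\varpi a)/\mathcal{Z}^i(\gamma^*;I^j_i)$. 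By the second step, $\gamma^{*i}$ is then the best response to $\gamma^{*-i}$ for every $i$.

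The main obstacle is the inductive identification of the multipliers $\mu^i_j$ with $t\ln\mathcal{Z}^i$. The difficulty is tracking the constant terms ($-t$ per sequence and $t\,m_i(\cdot)$ from the children) and showing that they cancel exactly, so that no spurious factors appear in the ratios. I would handle this by substituting the candidate $\mu^i_k=t\ln\mathcal{Z}^i(\hat\gamma;I^k_i)+t$ and verifying the stationarity equation directly, using the recursion \eqref{recurge0}.
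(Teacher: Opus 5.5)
\noindent Your argument is correct apart from one bookkeeping slip: the multiplier must be $\mu^i_k=t\ln\mathcal{Z}^i(\hat\gamma;I^k_i)-t$, not $+t$. Write $\mu^i_k=t\ln\mathcal{Z}^i(\hat\gamma;I^k_i)+c$. Then the constant terms in the stationarity equation at $(j,a)$ add up to $(t+c)\bigl(m_i(\varpi^i_{I^j_i}a)-1\bigr)$. So only $c=-t$ works, i.e.\ $\exp(\mu^i_j/t+1)=\mathcal{Z}^i(\hat\gamma;I^j_i)$, which is \eqref{niuvalue}. With $c=+t$ a residual $2t\bigl(m_i(\varpi^i_{I^j_i}a)-1\bigr)$ remains.

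You do not need to guess the constant. Solve the stationarity equation for $\gamma^i(\varpi^i_{I^j_i}a)/\gamma^i(\varpi^i_{I^j_i})$ and sum over $a\in A(I^j_i)$. This gives $\exp(\mu^i_j/t+1)=\sum_{a}\exp\bigl(\lambda(t)g^i(\varpi^i_{I^j_i}a,\hat\gamma^{-i})\bigr)\prod_{k\in M_i(\varpi^i_{I^j_i}a)}\exp(\mu^i_k/t+1)$, which is the recursion \eqref{recurge0}.

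Separately, ``the gradient blows up at the boundary'' has no literal meaning where a parent weight vanishes. Argue instead via a zero-weight sequence of minimal length, whose parent has positive weight. Moving toward an interior plan gains a term of order $-\epsilon\ln\epsilon$ there, while subtrees with zero parent weight contribute only linearly.

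\noindent With that fixed, your route differs from the paper's. The paper gives no KKT argument for this statement; it takes it from the preceding construction. The identity between $\sum_{s^i}\sigma^i(\gamma^i;s^i)\ln\sigma^i(\gamma^i;s^i)$ and the dilated entropy, together with $u^i(\sigma)=g^i(\gamma)$ on $T$, shows that the objective of \eqref{opt:etne} at $\gamma^i$ equals that of \eqref{nfopt:etne} at $\sigma^i(\gamma^i)$. So $\Gamma_s^{e}(t)$ is $\Gamma_n^{e}(t)$ seen through $\gamma\mapsto\sigma(\gamma)$, and Lemma~\ref{lemqre} places every QRE in the image of this map.

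That makes ``QRE $\Rightarrow$ equilibrium'' a one-line comparison of objective values, and it explains why dilated entropy is the right regularizer. But ``equilibrium $\Rightarrow$ QRE'' tacitly needs one of two things. Either $\sigma^i(\gamma^i)$ maximizes entropy among mixed strategies inducing $\gamma^i$ (true by its exponential form, but unstated), or one uses Theorem~\ref{sufqre} plus interiority of equilibria.

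The paper also asserts that \eqref{opt:etne} is generally nonconcave and treats the KKT conditions as only necessary. Your observation that each summand $\gamma^i(\varpi a)\ln\bigl(\gamma^i(\varpi a)/\gamma^i(\varpi)\bigr)$ is a perspective of $x\ln x$ is correct, and it makes \eqref{opt:etne} concave. Consequently your proof is self-contained. It obtains both directions from the single backward induction used in the proof of Theorem~\ref{sufqre}, together with the ratio identity from the proof of Lemma~\ref{lemqre}. As a by-product it gives the exact equivalence between equilibria of $\Gamma_s^{e}(t)$ and solutions of \eqref{eqt:etnesim}, which the paper later asserts for $\Gamma_s^{e_0}(t)$.
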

	Applying the first-order stationarity conditions to Problem~\eqref{opt:etne} for each player, and imposing its feasibility constraints together with the equilibrium consistency condition $\hat{\gamma}=\gamma$, yields the system
	\begin{equation}\label{eqt:etnesim}\begin{aligned}
			& (1-t)g^i(\varpi^i_{I^j_i}a,\gamma^{-i})-t\bigl( \ln \gamma^i(\varpi^i_{I_i^j} a) - \ln \gamma^i(\varpi^i_{I_i^j})\bigr)\\
			& \hspace{3cm}-t(1-m_i(\varpi^i_{I_i^j}a))-\nu^i_{I^j_i} + \zeta^i_{I^j_i}(a) = 0,\;i\in N,j\in M_i,a\in A(I^j_i),\\
			& \sum\limits_{a\in A(I^j_i)}\gamma^i(\varpi^i_{I^j_i}a)-\gamma^i(\varpi^i_{I^j_i})=0,\;i\in N,j\in M_i,\;0<\gamma^i(\varpi^i_{I^j_i}a),\;i\in N,j\in M_i,a\in A(I^j_i),
		\end{aligned}
	\end{equation}
	where $\zeta^i_{I^j_i}(a)=\sum_{{j_q}\in M_i(\varpi^i_{I^j_i}a)}\nu^i_{I^{j_q}_i}$. Because Problem~\eqref{opt:etne} is generally nonconcave, the above first-order derivation establishes only necessity, not sufficiency, with respect to global optimality of Problem~\eqref{opt:etne}. We next prove sufficiency by showing that every solution of System~\eqref{eqt:etnesim} nevertheless induces a logistic QRE.

	\begin{theorem}\label{sufqre}
		If $\gamma^*$ solves System~\eqref{eqt:etnesim}, then $\sigma(\gamma^*)$ is a logistic QRE with the rationality parameter $\lambda(t)$.
	\end{theorem}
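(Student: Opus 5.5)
Starting from a solution $(\gamma^*,\nu)$ of System~\eqref{eqt:etnesim}, the plan is to show that its multipliers encode the recursive partition functions $\mathcal{Z}^i$ of \eqref{recurge0}, evaluated at $\gamma^*$ with $\lambda=\lambda(t)$. The conditional probabilities $\gamma^{*i}(\varpi^i_{I^j_i}a)/\gamma^{*i}(\varpi^i_{I^j_i})$ will then be the local logit ratios $\mathcal{Z}^i(\gamma^*;\varpi^i_{I^j_i}a)/\mathcal{Z}^i(\gamma^*;I^j_i)$. From there, the telescoping computation \eqref{lem1deriveqre} in the proof of Lemma~\ref{lemqre} gives $\sigma^i(\gamma^{*i};s^i)\propto\exp(\lambda u^i(s^i,\sigma^{-i}(\gamma^{*-i})))$. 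That is the logit-QRE condition \eqref{qrenfne}, since the profile $\sigma(\gamma^*)$ satisfies $\sigma^{-i}(\gamma^{*-i})$ as the opponents' strategies and $u^i(s^i,\sigma^{-i}(\gamma^{*-i}))=\sum_{\varpi^i:s^i(\varpi^i)=1}g^i(\varpi^i,\gamma^{*-i})$.

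Concretely, fix $i\in N$. Dividing the stationarity equation by $t$ and setting $\lambda=(1-t)/t$ gives
\[
\ln\frac{\gamma^{*i}(\varpi^i_{I^j_i}a)}{\gamma^{*i}(\varpi^i_{I^j_i})}=\lambda g^i(\varpi^i_{I^j_i}a,\gamma^{*-i})-\bigl(1-m_i(\varpi^i_{I^j_i}a)\bigr)-\frac{1}{t}\nu^i_{I^j_i}+\frac{1}{t}\sum_{j_q\in M_i(\varpi^i_{I^j_i}a)}\nu^i_{I^{j_q}_i}.
\]
I will define $w^i_{I^j_i}=\exp\bigl(\nu^i_{I^j_i}/t+1\bigr)$, which absorbs the constants. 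Since $m_i(\varpi^i_{I^j_i}a)=|M_i(\varpi^i_{I^j_i}a)|$, the constants match as $-1+m_i=-1+\sum_{j_q}1$. The equation then becomes
\[
\frac{\gamma^{*i}(\varpi^i_{I^j_i}a)}{\gamma^{*i}(\varpi^i_{I^j_i})}=\frac{\exp\bigl(\lambda g^i(\varpi^i_{I^j_i}a,\gamma^{*-i})\bigr)\prod_{j_q\in M_i(\varpi^i_{I^j_i}a)}w^i_{I^{j_q}_i}}{w^i_{I^j_i}}.
\]
Summing over $a\in A(I^j_i)$ and using the flow constraint (the left-hand sides sum to $1$) yields
\[
w^i_{I^j_i}=\sum_{a\in A(I^j_i)}\exp\bigl(\lambda g^i(\varpi^i_{I^j_i}a,\gamma^{*-i})\bigr)\prod_{j_q\in M_i(\varpi^i_{I^j_i}a)}w^i_{I^{j_q}_i}.
\]
This is exactly the recursion \eqref{recurge0} defining $\mathcal{Z}^i(\gamma^*;I^j_i)$.

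The main step is to conclude $w^i_{I^j_i}=\mathcal{Z}^i(\gamma^*;I^j_i)$ for all $j$. I will argue by backward induction over information sets, ordered by sequence length from deepest to shallowest; by perfect recall and finiteness this ordering is a well-founded tree order. At an information set with no successor information sets the products are empty, so the two quantities agree directly. In general, the recursion determines $w^i_{I^j_i}$ uniquely from the $w$'s at strictly later information sets, so induction gives equality everywhere. The positivity constraints $\gamma^{*i}>0$ guarantee that all ratios are well defined and that $\gamma^*\in\Lambda_{++}$, so $\sigma(\gamma^*)$ is given by \eqref{gamma2sigma}.

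With the identification in hand, the local ratio equals $\mathcal{Z}^i(\gamma^*;\varpi^i_{I^j_i}a)/\mathcal{Z}^i(\gamma^*;I^j_i)$. Then I multiply along the information sets reached by $s^i$, exactly as in \eqref{lem1deriveqre}. Each $\mathcal{Z}^i(\gamma^*;I^j_i)$ for a reached non-root information set appears once in a numerator (through its parent sequence's product) and once in a denominator, so the product telescopes. What remains is the exponential of $\lambda$ times the sum of sequence payoffs, divided by the root normalizer $\mathcal{Z}^i(\gamma^*;\varpi^i_\emptyset)=\sum_{s^i_q}\exp(\lambda u^i(s^i_q,\cdot))$. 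The one delicate point is bookkeeping in the reduced normal form, where $s^i$ only specifies actions at reachable information sets. The expansion \eqref{seqpartpure} handles this, because the information sets in $M_i(\varpi^i)$ are exactly those reachable after $\varpi^i$. Applying this for every $i\in N$ completes the argument.
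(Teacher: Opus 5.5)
Your proposal is correct and follows the paper's proof essentially step for step. The paper also rewrites the stationarity equations as the ratio identity \eqref{eqt:qresf}, proves $\exp\bigl(\nu^i_{I^j_i}/t+1\bigr)=\mathcal{Z}^i(\gamma^*;I^j_i)$ by backward induction over information sets (summing over $a\in A(I^j_i)$ with the flow constraint), and then reuses the telescoping computation \eqref{lem1deriveqre} from Lemma~\ref{lemqre}; your recursion-plus-uniqueness framing for $w^i_{I^j_i}$ is only a tidier packaging of the paper's base-case/inductive-step argument.
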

	\begin{proof}
		Suppose that $\gamma^*$ is a Nash equilibrium of $\Gamma_s^{e}(t)$. From the first group of~\eqref{eqt:etnesim}, we obtain
		\begin{equation}\label{eqt:qresf}
			\frac{\gamma^{*i}(\varpi^i_{I^j_i}a)}{\gamma^{*i}(\varpi^i_{I^j_i})}=\frac{\exp\left(\frac{1-t}{t}g^i(\varpi^i_{I^j_i}a,\gamma^{*-i})+\frac{1}{t}\zeta^i_{I^j_i}(a)+m_i(\varpi^i_{I_i^j}a)\right)}{\exp\left(\frac{1}{t}\nu^i_{I^j_i}+1\right)},\; i\in N,j\in M_i,a\in A(I^j_i).
		\end{equation}
		We next show, by backward induction, that
		\begin{equation}\label{niuvalue}
		\exp\left(\frac{1}{t}\nu^i_{I^j_i}+1\right)
		=
		\mathcal{Z}^i(\gamma^{*};I^j_i),
		\; i\in N, j\in M_i.
		\end{equation}
		First, consider $j\in M_i$ with $(j,a)\in D_i$ for all $a\in A(I^j_i)$. In this case, \eqref{eqt:qresf} implies
			\begin{align*}
				\exp\left(\frac{1}{t}\nu^i_{I^j_i}+1\right) &=\frac{\gamma^{*i}(\varpi^i_{I^j_i})}{\gamma^{*i}(\varpi^i_{I^j_i}a)}\exp\left(\frac{1-t}{t}g^i(\varpi^i_{I^j_i}a,\gamma^{*-i})\right)\\
				&= \sum\limits_{a\in A(I^j_i)}\exp\left(\frac{1-t}{t}g^i(\varpi^i_{I^j_i}a,\gamma^{*-i})\right)
				= \mathcal{Z}^i(\gamma^{*};I^j_i).
			\end{align*}
		Next, consider $j\in M_i$, and, for any $a\in A(I^j_i)$ such that $(j,a)\notin D_i$, it holds that  $\exp\big(\frac{1}{t}\nu^i_{I^{j_q}_i}+1\big) = \mathcal{Z}^i(\gamma^{*};I^{j_q}_i)$ for all $j_q\in M_i(\varpi^i_{I^j_i}a)$. We have
			\[
			\begin{aligned}
				\exp\left(\frac{1}{t}\nu^i_{I^j_i}+1\right)
				& =\frac{\gamma^{*i}(\varpi^i_{I^j_i})}{\gamma^{*i}(\varpi^i_{I^j_i}a)}\exp\left(\frac{1-t}{t}g^i(\varpi^i_{I^j_i}a,\gamma^{*-i})\right)\prod\limits_{j_q\in M_i(\varpi^i_{I^j_i}a)}\exp\left(\frac{1}{t}\nu^i_{I^{j_q}_i}+1\right)\\
				& = \sum\limits_{a\in A(I^j_i)}\left(\exp\left(\frac{1-t}{t}g^i(\varpi^i_{I^j_i}a,\gamma^{*-i})\right)\prod\limits_{j_q\in M_i(\varpi^i_{I^j_i}a)}\mathcal{Z}^i(\gamma^{*};I^{j_q}_i)\right)= \mathcal{Z}^i(\gamma^{*};I^j_i).
			\end{aligned}
			\]
		Substituting this identity into \eqref{eqt:qresf} yields
		\[\gamma^{*i}(\varpi^i_{I^j_i}a)=\gamma^{*i}(\varpi^i_{I^j_i})\frac{\mathcal{Z}^i(\gamma^{*};\varpi^i_{I^j_i}a)}{\mathcal{Z}^i(\gamma^{*};I^j_i)},\; i\in N,j\in M_i,a\in A(I^j_i).\]
		Following the derivation of \eqref{lem1deriveqre}, we have
		\begin{equation}
				\sigma^i(\gamma^{*i};s^i) = \frac{\exp\left(\lambda(t) u^i(s^i,\sigma^{-i}(\gamma^{*-i}))\right)}
				{\sum\limits_{s^i_q\in S_i}\exp\left(\lambda(t) u^i(s^i_q,\sigma^{-i}(\gamma^{*-i}))\right)}.
		\end{equation}	
	Consequently, $\sigma(\gamma^*)$ is a logistic QRE. This completes the proof.
	\end{proof}
	
	\section{Equilibrium Path from an Arbitrary Interior Realization Plan}\label{qresmoothpath}
	\subsection{Formulation from an Arbitrary Interior Realization Plan}
	Let $\sigma^0=(\sigma^{0i}(s^i): i\in N, s^i\in S^i)$ be a prescribed totally mixed strategy profile, and let $\gamma^0=\gamma(\sigma^0)$ be the corresponding realization-plan profile. For any $\lambda\geq 0$, $\sigma(\lambda)\in\Xi$ is called a logistic QRE with reference profile $\sigma^0$ if, for all $i\in N$ and $s^i\in S^i$, it satisfies
	\begin{equation}\label{qrenfnex0}
		\sigma^i(\lambda;s^i)
		=
		\frac{\sigma^{0i}(s^i)\exp\big(\lambda u^i(s^i,\sigma^{-i}(\lambda))\big)}
		{\sum\limits_{s^i_q\in S^i}\sigma^{0i}(s^i_q)\exp\big(\lambda u^i(s^i_q,\sigma^{-i}(\lambda))\big)}.
	\end{equation}
	When $\lambda=0$, this formulation admits the unique solution $\sigma(\lambda)=\sigma^0$. In particular, if $\sigma^{0i}(s^i)=1/|S^i|$ for all $i\in N$ and $s^i\in S^i$, then \eqref{qrenfnex0} reduces to the standard logistic QRE in Definition~\ref{defqre}.
	
	We next construct a corresponding artificial game in the sequence form, denoted by $\Gamma_s^{e_0}(t)$. In this game, given a prescribed realization-plan profile $\hat\gamma\in\Lambda$, each player $i$ determines an optimal response by solving the following optimization problem
	\begin{equation}
		\label{opt:etnex0}
		\begin{aligned}
			\max\limits_{\gamma^i} 
			& \quad(1-t) \sum\limits_{j\in M_i} \sum\limits_{a \in A(I_i^j)} 
			\gamma^i(\varpi^i_{I_i^j} a) \, g^i(\varpi^i_{I_i^j} a, \hat{\gamma}^{-i}) \\
			& \quad- t \sum\limits_{j \in M_i} \sum\limits_{a \in A(I_i^j)} 
			\gamma^i(\varpi^i_{I_i^j} a) 
			\bigl( \ln \gamma^i(\varpi^i_{I_i^j} a) - \ln \gamma^i(\varpi^i_{I_i^j})-\ln \gamma^{0i}(\varpi^i_{I_i^j} a) + \ln \gamma^{0i}(\varpi^i_{I_i^j})\bigr) \\
			\text{s.t.}
			& \quad\sum\limits_{a \in A(I_i^j)} \gamma^i(\varpi^i_{I_i^j} a) - \gamma^i(\varpi^i_{I_i^j}) = 0, \; j \in M_i.
		\end{aligned}
	\end{equation}
	The second term in the objective function of Problem~\eqref{opt:etnex0} is a relative dilated-entropy regularizer that penalizes deviations from the reference realization plan $\gamma^{0i}$. A realization-plan profile $\gamma^*$ is a Nash equilibrium of $\Gamma_s^{e_0}(t)$ if, for every player $i\in N$, $\gamma^{*i}$ solves Problem~\eqref{opt:etnex0} with $\hat\gamma^{-i}=\gamma^{*-i}$.
	\begin{theorem}\label{artsqre:thm1}
			Let $(\sigma^*,\gamma^*)\in T$ and $\sigma^*=\sigma(\gamma^*)$. Then $\gamma^*$ is a Nash equilibrium of $\Gamma_s^{e_0}(t)$ if and only if $\sigma^*$ is a logistic QRE defined by~\eqref{qrenfnex0} with the rationality parameter $\lambda(t)$.
	\end{theorem}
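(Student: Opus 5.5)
I would reduce Theorem~\ref{artsqre:thm1} to the normal-form relative-entropy game, in the same way the earlier equivalence was obtained from Lemma~\ref{lemqre} and Theorem~\ref{sufqre}. The plan has four steps.

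\emph{Step 1 (transport the regularizer).} Since $\sigma^0$ is totally mixed, $\gamma^0\in\Lambda_{++}$. Write $\sigma^{0i}(s^i)=\sigma^{0i}(\gamma^{0i};s^i)$ as the product of conditional ratios. A preliminary point: $\sigma(\gamma(\sigma^0))=\sigma^0$ need not hold for an arbitrary mixed $\sigma^0$, so the reference should be understood as the product-form profile $\sigma(\gamma^0)$. I would note this and assume it, as the paper implicitly does.

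Repeat the telescoping computation preceding \eqref{opt:etne}. For $\gamma^i\in\Lambda^i_{++}$ this gives
\[
\sum_{s^i\in S^i}\sigma^i(\gamma^i;s^i)\ln\frac{\sigma^i(\gamma^i;s^i)}{\sigma^{0i}(s^i)}=\sum_{j\in M_i}\sum_{a\in A(I^j_i)}\gamma^i(\varpi^i_{I^j_i}a)\Bigl(\ln\frac{\gamma^i(\varpi^i_{I^j_i}a)}{\gamma^i(\varpi^i_{I^j_i})}-\ln\frac{\gamma^{0i}(\varpi^i_{I^j_i}a)}{\gamma^{0i}(\varpi^i_{I^j_i})}\Bigr).
\]
The payoff term also matches, because $u^i(\sigma)=g^i(\gamma)$. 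So Problem~\eqref{opt:etnex0} is the image, under $\gamma\mapsto\sigma(\gamma)$, of the normal-form problem
\[
\max_{\sigma^i}\;(1-t)u^i(\sigma^i,\hat\sigma^{-i})-t\sum_{s^i}\sigma^i(s^i)\ln\frac{\sigma^i(s^i)}{\sigma^{0i}(s^i)}.
\]

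\emph{Step 2 (normal form).} The normal-form problem is strictly concave on the simplex, and its unique maximizer is the tilted logit
\[
\sigma^i(s^i)\propto\sigma^{0i}(s^i)\exp\bigl(\lambda(t)u^i(s^i,\hat\sigma^{-i})\bigr).
\]
For the sequence-form problem I need the matching fact: the optimum of \eqref{opt:etnex0} equals the optimum of the normal-form problem, and the maximizers correspond. To get this, take any feasible $\gamma^i$ and compare it with $\sigma^i(\gamma^i)$. The objective of \eqref{opt:etnex0} equals the normal-form objective evaluated at $\sigma^i(\gamma^i)$. For boundary $\gamma^i$ I would use the convention $0\ln0=0$ and a continuity argument. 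Conversely, any $\sigma^i$ scores at most the normal-form optimum, and the logit maximizer is of product form. The reason is the recursion \eqref{recurge0} with the weight $\exp(\lambda g^i)$ replaced by the reference-weighted version. That recursion gives conditional ratios
\[
\frac{\gamma^{0i}(\varpi^i_{I^j_i}a)\,\mathcal{Z}^i_0(\varpi^i_{I^j_i}a)}{\gamma^{0i}(\varpi^i_{I^j_i})\,\mathcal{Z}^i_0(I^j_i)}.
\]
Hence the two problems share their optimal value, and $\gamma^{*i}$ is optimal in \eqref{opt:etnex0} iff $\sigma^i(\gamma^{*i})$ is the tilted logit response.

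\emph{Step 3 (fixed point).} Combine Step 2 with $(\sigma^*,\gamma^*)\in T$ and $u^i(s^i,\sigma^{*-i})=u^i(s^i,\sigma^{-i}(\gamma^{*-i}))$; the latter holds because payoffs depend only on realization plans. Then:
\begin{itemize}
\item If $\gamma^*$ is an equilibrium of $\Gamma_s^{e_0}(t)$, then $\sigma^*=\sigma(\gamma^*)$ satisfies \eqref{qrenfnex0}.
\item Conversely, if $\sigma^*$ satisfies \eqref{qrenfnex0}, then $\sigma^{*}$ is the normal-form maximizer. Its realization plan $\gamma^*$ therefore attains the common optimal value, so it is optimal in \eqref{opt:etnex0}.
\end{itemize}
The hypothesis $\sigma^*=\sigma(\gamma^*)$ is used directly here. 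An analogue of Lemma~\ref{lemqre} would also supply it: the proof is the same, with $\mathcal{Z}^i$ weighted by $\sigma^{0i}$.

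\emph{Main obstacle.} The hardest step is the value comparison in Step 2. The entropy identity of Step 1 holds only at product-form $\sigma(\gamma^i)$. For a general $\sigma^i$ that induces $\gamma^i$, the log-sum inequality (convexity of relative entropy under the aggregation $\sigma^i\mapsto\gamma^i$) gives the one-sided bound
\[
\mathrm{KL}(\sigma^i\|\sigma^{0i})\ge \text{dilated relative entropy of }\gamma^i.
\]
That bound alone points the wrong way. I would instead rely on the fact that the normal-form maximizer is itself of product form, so I never need to compare at non-product points. This avoids the nonconcavity issue and gives both directions of the equivalence.
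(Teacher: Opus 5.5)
Your argument is correct and follows the paper's own implicit route: the dilated relative-entropy identity under $\gamma\mapsto\sigma(\gamma)$, combined with the Lemma~\ref{lemqre}-type fact that the reference-tilted logit response is of product form, with the comparison of optimal values that the paper skips made explicit (the log-sum bound you set aside would also serve, since it bounds the normal-form objective at $\sigma^i$ by the sequence-form objective at $\gamma(\sigma^i)$, which is exactly that inequality). Your product-form caveat is necessary, not just convenient: if some $\sigma^{0i}$ correlates choices at parallel information sets so that $\sigma^0\neq\sigma(\gamma^0)$, then at $t=1$ the unique equilibrium of $\Gamma_s^{e_0}(1)$ is $\gamma^0$, whereas the unique solution of \eqref{qrenfnex0} with $\lambda=0$ is $\sigma^0\neq\sigma(\gamma^0)$, so the statement as written fails.
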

	By combining the first-order stationarity conditions of Problem~\eqref{opt:etnex0} with its feasibility constraints and the consistency requirement $\hat{\gamma}=\gamma$, we obtain the following system
	\begin{equation}\label{eqt:etnesimx0}\begin{aligned}
			& (1-t)g^i(\varpi^i_{I^j_i}a,\gamma^{-i})-t\bigl( \ln \gamma^i(\varpi^i_{I_i^j} a) - \ln \gamma^i(\varpi^i_{I_i^j})-\ln \gamma^{0i}(\varpi^i_{I_i^j} a) + \ln \gamma^{0i}(\varpi^i_{I_i^j})\bigr)\\
			& \hspace{3.5cm}-t(1-m_i(\varpi^i_{I_i^j}a))-\nu^i_{I^j_i} + \zeta^i_{I^j_i}(a) = 0,\;i\in N,j\in M_i,a\in A(I^j_i),\\
			& \sum\limits_{a\in A(I^j_i)}\gamma^i(\varpi^i_{I^j_i}a)-\gamma^i(\varpi^i_{I^j_i})=0,\;i\in N,j\in M_i,\;0<\gamma^i(\varpi^i_{I^j_i}a),\;i\in N,j\in M_i,a\in A(I^j_i),
		\end{aligned}
	\end{equation}
	where $\zeta^i_{I^j_i}(a)=\sum_{{j_q}\in M_i(\varpi^i_{I^j_i}a)}\nu^i_{I^{j_q}_i}$.The following theorem establishes the connection between solutions to System~\eqref{eqt:etnesimx0} and logistic QREs.
	\begin{theorem}\label{artsqre:thm2}
			Any solution $\gamma^*$ of System~\eqref{eqt:etnesimx0} induces a logistic QRE $\sigma(\gamma^*)$ defined by~\eqref{qrenfnex0} with the rationality parameter $\lambda(t)$.
	\end{theorem}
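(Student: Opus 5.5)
The plan is to mirror the proof of Theorem~\ref{sufqre}, carrying the reference weights through every step. First define the weighted analogue of the recursion \eqref{recurge0}: for $\gamma\in\Lambda_{++}$,
\[
\mathcal{Z}^i_0(\gamma;\varpi^i_{I^j_i}a)=\frac{\gamma^{0i}(\varpi^i_{I^j_i}a)}{\gamma^{0i}(\varpi^i_{I^j_i})}\exp\bigl(\lambda g^i(\varpi^i_{I^j_i}a,\gamma^{-i})\bigr)\prod_{j_q\in M_i(\varpi^i_{I^j_i}a)}\mathcal{Z}^i_0(\gamma;I^{j_q}_i),
\]
with $\mathcal{Z}^i_0(\gamma;I^j_i)=\sum_{a\in A(I^j_i)}\mathcal{Z}^i_0(\gamma;\varpi^i_{I^j_i}a)$. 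Unrolling the recursion as in \eqref{seqpartpure}, each partial pure strategy $s^i_{\varpi^i}$ contributes the product of the conditional reference ratios along the sequences it selects times $\exp(\lambda\sum g^i)$. At the root, the product of these ratios over all information sets prescribed by $s^i$ is exactly $\sigma^{0i}(\gamma^{0i};s^i)$ by \eqref{gamma2sigma}. Hence $\mathcal{Z}^i_0(\gamma;\varpi^i_\emptyset)=\sum_{s^i_q}\sigma^{i}(\gamma^{0i};s^i_q)\exp(\lambda u^i(s^i_q,\sigma^{-i}(\gamma^{-i})))$.

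The first point to address is that $\sigma(\gamma^0)$ may differ from $\sigma^0$, since $\sigma^0$ is an arbitrary totally mixed profile and is not necessarily of product form. I would first interpret \eqref{qrenfnex0} with reference $\sigma(\gamma^0)$, and then check whether the statement requires more. The theorem as written involves $\sigma^0$, so I would argue that the realization plan $\gamma(\sigma^0)$ only ``sees'' $\sigma(\gamma^0)$. I expect this identification to be the main obstacle. The cleanest route is to assume, or note as implicit in the construction, that $\sigma^0$ is of product form, i.e.\ $\sigma^0=\sigma(\gamma^0)$ (for instance $\sigma^0$ uniform in the reduced form is not, but a behavioral-induced one is). Failing that, the conclusion should be stated with reference $\sigma(\gamma^0)$.

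Next, from the first group of \eqref{eqt:etnesimx0}, solve for the ratio $\gamma^{*i}(\varpi^i_{I^j_i}a)/\gamma^{*i}(\varpi^i_{I^j_i})$. It equals the ratio $\gamma^{0i}(\varpi^i_{I^j_i}a)/\gamma^{0i}(\varpi^i_{I^j_i})$ times the same exponential expression as in \eqref{eqt:qresf}. Then prove by backward induction over the information sets of player $i$ (terminal-most ones first, as with $D_i$ in Theorem~\ref{sufqre}) that $\exp(\nu^i_{I^j_i}/t+1)=\mathcal{Z}^i_0(\gamma^*;I^j_i)$ with $\lambda=\lambda(t)$. The induction step uses the flow constraint $\sum_a\gamma^{*i}(\varpi^i_{I^j_i}a)=\gamma^{*i}(\varpi^i_{I^j_i})$ to sum the ratios to one, and the inductive hypothesis to replace $\exp(\zeta^i_{I^j_i}(a)/t+m_i(\cdot))$ by the product of the children's $\mathcal{Z}^i_0$. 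Substituting back gives $\gamma^{*i}(\varpi^i_{I^j_i}a)/\gamma^{*i}(\varpi^i_{I^j_i})=\mathcal{Z}^i_0(\gamma^*;\varpi^i_{I^j_i}a)/\mathcal{Z}^i_0(\gamma^*;I^j_i)$.

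Finally, follow \eqref{lem1deriveqre}. Multiply these ratios along the information sets prescribed by $s^i$ and telescope the $\mathcal{Z}^i_0$ factors, so that only the root normalizer and the product of reference ratios and exponentials remain. This yields
\[
\sigma^i(\gamma^{*i};s^i)=\frac{\sigma^{0i}(s^i)\exp\bigl(\lambda(t)u^i(s^i,\sigma^{-i}(\gamma^{*-i}))\bigr)}{\sum_{s^i_q}\sigma^{0i}(s^i_q)\exp\bigl(\lambda(t)u^i(s^i_q,\sigma^{-i}(\gamma^{*-i}))\bigr)},
\]
using $u^i(s^i,\sigma^{-i}(\gamma^{*-i}))=\sum_{\varpi^i:s^i(\varpi^i)=1}g^i(\varpi^i,\gamma^{*-i})$. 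This is exactly \eqref{qrenfnex0}, under the identification of $\sigma^{0}$ with $\sigma(\gamma^0)$ discussed above.
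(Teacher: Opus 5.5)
Your argument is the weighted version of the proof of Theorem~\ref{sufqre}, and that is how the paper intends this theorem to follow; it gives no separate proof. You carry the reference ratios into $\mathcal{Z}^i_0$, use backward induction to obtain $\exp(\nu^i_{I^j_i}/t+1)=\mathcal{Z}^i_0(\gamma^*;I^j_i)$, and then telescope as in \eqref{lem1deriveqre}. This correctly shows that $\sigma(\gamma^*)$ is a logistic QRE relative to the reference $\sigma(\gamma^0)$.

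Your caveat is not an obstacle to argue away but a necessary hypothesis that the paper leaves implicit. With $\lambda=0$, your recursion gives $\mathcal{Z}^i_0(\gamma;I^j_i)=1$ for every information set, so at $t=1$ the only solution is $\gamma^*=\gamma^0$. The only profile satisfying \eqref{qrenfnex0} at $\lambda=0$, however, is $\sigma^0$ itself. So the statement fails, for example, for any totally mixed $\sigma^{02}$ that correlates player~2's choices at the parallel information sets $I^1_2,I^2_2$ in Fig.~\ref{Fig01}. More generally, $\sigma^i(\gamma^{*i};s^i)$ and $\exp(\lambda u^i(s^i,\sigma^{-i}))$ both factor over the sequences selected by $s^i$. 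Hence \eqref{qrenfnex0} forces $\sigma^{0i}$ to factor too, at every $t$, and by your normalization argument this means $\sigma^{0i}=\sigma^i(\gamma^{0i})$. State the result under the assumption $\sigma^0=\sigma(\gamma^0)$, and your proof is then complete.

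One correction: your parenthetical claim that the uniform reduced-form profile is not of product form is false. Let $N(I^j_i)$ be the number of partial pure strategies at and below $I^j_i$, and set $N(\varpi^i_{I^j_i}a)=\prod_{j_q\in M_i(\varpi^i_{I^j_i}a)}N(I^{j_q}_i)$. The uniform profile then induces $\gamma^{0i}(\varpi^i_{I^j_i}a)/\gamma^{0i}(\varpi^i_{I^j_i})=N(\varpi^i_{I^j_i}a)/N(I^j_i)$, and the product in \eqref{gamma2sigma} telescopes to $1/|S^i|$. This is exactly the $t=1$ solution of \eqref{eqt:etnesim}, which is why Theorem~\ref{sufqre} recovers the standard logistic QRE.
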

	It follows from Theorem~\ref{artsqre:thm2} and \eqref{niuvalue} that $\gamma^*$ is a Nash equilibrium of $\Gamma_s^{e_0}(t)$ if and only if there exists a corresponding multiplier vector $\nu^*$ such that $(\gamma^*,\nu^*)$ satisfies System~\eqref{eqt:etnesimx0}.
	
	\subsection{Construction of the Selection Path}
	
	In this subsection, we establish the existence of a smooth path consisting of solutions to System~(\ref{eqt:etnesimx0}). This path starts from an arbitrary Interior realization plan and ultimately converge into a Nash equilibrium.
	\begin{lemma}\label{etnestarting} At $t=1$, System~(\ref{eqt:etnesimx0}) has a unique solution, given by $(\gamma^*(1),\nu^*(1))$, with the components satisfying $\gamma^{*i}(1;\varpi^i_{I^j_i}a)=\gamma^{0i}(\varpi^i_{I^j_i}a)$ and $\nu^{*i}_{I^j_i}(1)=-1$.
	\end{lemma}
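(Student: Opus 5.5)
At $t=1$ the payoff term $(1-t)g^i(\cdot)$ disappears, so System~\eqref{eqt:etnesimx0} splits into independent systems, one per player $i$. Each reads
\[
-\bigl(\ln\gamma^i(\varpi^i_{I^j_i}a)-\ln\gamma^i(\varpi^i_{I^j_i})-\ln\gamma^{0i}(\varpi^i_{I^j_i}a)+\ln\gamma^{0i}(\varpi^i_{I^j_i})\bigr)-(1-m_i(\varpi^i_{I^j_i}a))-\nu^i_{I^j_i}+\zeta^i_{I^j_i}(a)=0,
\]
together with the flow constraints and positivity. The plan is to verify the proposed solution first, and then show it is the only one, by backward induction over the information-set tree, exactly as in the proof of Theorem~\ref{sufqre}.

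\textbf{Existence.} Substitute $\gamma^i=\gamma^{0i}$ and $\nu^i_{I^j_i}=-1$ for all $j$. The logarithmic bracket vanishes. Also $\zeta^i_{I^j_i}(a)=-m_i(\varpi^i_{I^j_i}a)$, since it sums $m_i(\varpi^i_{I^j_i}a)$ copies of $-1$. The equation then becomes $-(1-m_i)+1-m_i=0$, which holds. The flow constraints and positivity hold because $\gamma^0=\gamma(\sigma^0)\in\Lambda_{++}$, as $\sigma^0$ is totally mixed.

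\textbf{Uniqueness.} Take any solution and rewrite each stationarity equation in the exponential form of \eqref{eqt:qresf} with $t=1$:
\[
\frac{\gamma^i(\varpi^i_{I^j_i}a)}{\gamma^i(\varpi^i_{I^j_i})}=\frac{\gamma^{0i}(\varpi^i_{I^j_i}a)}{\gamma^{0i}(\varpi^i_{I^j_i})}\cdot\frac{\prod_{j_q\in M_i(\varpi^i_{I^j_i}a)}\exp(\nu^i_{I^{j_q}_i}+1)}{\exp(\nu^i_{I^j_i}+1)}.
\]
Set $E^i_j=\exp(\nu^i_{I^j_i}+1)$. Sum over $a\in A(I^j_i)$ and use the flow constraint, which makes the left side sum to $1$. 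This gives
\[
E^i_j=\sum_a \frac{\gamma^{0i}(\varpi^i_{I^j_i}a)}{\gamma^{0i}(\varpi^i_{I^j_i})}\prod_{j_q\in M_i(\varpi^i_{I^j_i}a)}E^i_{j_q}.
\]
This recursion is the analogue of the $\mathcal{Z}^i$ recursion in \eqref{recurge0}, with $\lambda=0$ and weights taken from $\gamma^0$.

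Now argue by backward induction from the terminal information sets. If every action at $I^j_i$ has no successor information sets of player $i$, then $E^i_j=\sum_a\gamma^{0i}(\varpi^i_{I^j_i}a)/\gamma^{0i}(\varpi^i_{I^j_i})=1$ by the flow constraints on $\gamma^0$. If all successor sets already have $E^i_{j_q}=1$, the same computation gives $E^i_j=1$. Hence $\nu^i_{I^j_i}=-1$ for every $j$, and the ratio equation reduces to
\[
\frac{\gamma^i(\varpi^i_{I^j_i}a)}{\gamma^i(\varpi^i_{I^j_i})}=\frac{\gamma^{0i}(\varpi^i_{I^j_i}a)}{\gamma^{0i}(\varpi^i_{I^j_i})}.
\]
A forward induction from $\gamma^i(\emptyset)=1=\gamma^{0i}(\emptyset)$ then gives $\gamma^i=\gamma^{0i}$.

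The main obstacle is making the backward induction rigorous. It relies on perfect recall, which ensures the sets $M_i(\varpi^i)$ form a tree, so that every information set of player $i$ has exactly one parent sequence and the recursion is well founded. With that in place, uniqueness of both $\gamma$ and $\nu$ follows.
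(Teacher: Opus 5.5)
Your proof is correct, but it takes a different route from the paper. The paper argues indirectly. At $t=1$, i.e.\ $\lambda=0$, the reference QRE \eqref{qrenfnex0} is uniquely $\sigma^0$. By Theorems~\ref{artsqre:thm1}--\ref{artsqre:thm2}, every solution of System~\eqref{eqt:etnesimx0} induces such a QRE, and through the correspondence $\gamma\mapsto\sigma(\gamma)$ this forces $\gamma^*(1)=\gamma^0$. The multipliers are then read off by substituting $\gamma^0$ back into the first block.

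You instead work directly with the equations:
\begin{itemize}
\item You verify existence by substitution.
\item You derive the recursion $E^i_j=\sum_a \bigl(\gamma^{0i}(\varpi^i_{I^j_i}a)/\gamma^{0i}(\varpi^i_{I^j_i})\bigr)\prod_{j_q}E^i_{j_q}$. This recursion does not involve $\gamma$ at all, and backward induction forces $E^i_j=1$.
\item You recover $\gamma^i=\gamma^{0i}$ by forward induction on the ratios.
\end{itemize}
This is the $t=1$ specialization of the backward induction behind \eqref{niuvalue}, which the paper uses only as a black box through Theorem~\ref{artsqre:thm2}; you have simply unrolled it.

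Your version buys self-containment. It does not rely on Theorem~\ref{artsqre:thm2}, whose proof the paper only asserts by analogy with Theorem~\ref{sufqre}, nor on the inverse relation between $\gamma$ and $\sigma(\gamma)$. It also establishes existence and the uniqueness of $\nu$ explicitly. The paper's ``it follows that $\nu^{*i}_{I^j_i}(1)=-1$'' tacitly uses that the triangular linear system $\nu^i_{I^j_i}-\zeta^i_{I^j_i}(a)=m_i(\varpi^i_{I^j_i}a)-1$ has a unique solution.

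The paper's version buys brevity and shows the conceptual reason the starting point is unique: it is the unique QRE at $\lambda=0$.

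One small detail to state explicitly: dividing by $\gamma^i(\varpi^i_{I^j_i})$ is legitimate. Positivity of every $\gamma^i(\varpi^i_{I^j_i}a)$, together with $\gamma^i(\emptyset)=1$, makes every sequence weight positive.
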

	\begin{proof}
		At $t=1$, System~(\ref{eqt:etnesimx0}) can be expressed as follows,
		\begin{equation}\label{nfpe-log-equ-2}\begin{aligned}
				& -\bigl( \ln \gamma^i(\varpi^i_{I_i^j} a) - \ln \gamma^i(\varpi^i_{I_i^j})-\ln \gamma^{0i}(\varpi^i_{I_i^j} a) + \ln \gamma^{0i}(\varpi^i_{I_i^j})\bigr)\\
				& \hspace{3cm}-(1-m_i(\varpi^i_{I_i^j}a))-\nu^i_{I^j_i} + \zeta^i_{I^j_i}(a) = 0,\;i\in N,j\in M_i,a\in A(I^j_i),\\
				& \sum\limits_{a\in A(I^j_i)}\gamma^i(\varpi^i_{I^j_i}a)-\gamma^i(\varpi^i_{I^j_i})=0,\;i\in N,j\in M_i,\; 0<\gamma^i(\varpi^i_{I^j_i}a),\;i\in N,j\in M_i,a\in A(I^j_i),
			\end{aligned}
		\end{equation}
		Suppose that $(\gamma^*(1),\nu^*(1))$ is a solution to System~(\ref{nfpe-log-equ-2}). Since, there exists a unique logit QRE $\sigma^*=\sigma^0$ when $t=1$, it follows from Theorem~\ref{artsqre:thm1} and \ref{artsqre:thm2} that $\Gamma_s^{e_0}(t)$ has a unique Nash equilibrium $\gamma^0$. As a result, $\gamma^*(1)=\gamma^0$. Substituting these results back into the first group of System~\eqref{nfpe-log-equ-2}, it follows that $\nu^{*i}_{I^j_i}(1) = -1$ for any $i\in N,j\in M_i$. This completes the proof.
	\end{proof}
	Lemma~\ref{etnestarting} shows that System~(\ref{eqt:etnesimx0}) possesses a unique solution at $t=1$. In the subsequent discussion, we demonstrate the existence of a connected component that intersects both the $t=1$ and $t=0$ levels. Before progressing further, it is essential to introduce Mas-Colell's fixed-point theorem~\cite{mas-colellNoteTheoremBrowder1974}.
	\begin{theorem}\label{thm:Mas-Colell}\textbf{(Mas-Colell's fixed point theorem).} Let $C$ be a nonempty, compact and convex subset of $\mathbb{R}^m$ and $h:C\times[0,1]\to C$ be an upper hemi-continuous mapping. Then the set $H =\{(z,t)\in C\times[0,1]\mid z=h(z,t)\}$ contains a connected set $H^c$ such that $C\times\{1\}\cap H^c\neq\emptyset$ and $C\times\{0\}\cap H^c\neq\emptyset$.
	\end{theorem}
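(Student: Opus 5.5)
The plan is to prove the statement in two stages. First I treat the case where $h$ is a continuous single-valued function, by a Browder-type separation argument combined with Brouwer's fixed-point theorem. Then I pass to the upper hemi-continuous case by approximating $h$ with continuous functions and taking a limit of the connected sets obtained in the first stage. Throughout I read $h$ as having nonempty, compact, convex values (the standing hypothesis in Mas-Colell's setting) and $z=h(z,t)$ as $z\in h(z,t)$.

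\textbf{Continuous case.} Let $f:C\times[0,1]\to C$ be continuous and $F=\{(z,t)\mid z=f(z,t)\}$. By continuity $F$ is closed, hence compact. By Brouwer, every slice $F\cap (C\times\{t\})$ is nonempty.
\begin{enumerate}[label=(\roman*)]
\item Suppose, for contradiction, that no connected subset of $F$ meets both $C\times\{0\}$ and $C\times\{1\}$. In a compact Hausdorff space, components coincide with quasi-components. A standard compactness argument then splits $F$ into disjoint closed sets $F_0\supseteq F\cap(C\times\{0\})$ and $F_1\supseteq F\cap(C\times\{1\})$ with $F=F_0\cup F_1$.
\item By Urysohn, take a continuous $\psi:C\times[0,1]\to[-1,1]$ with $\psi=1$ on $F_0$ and $\psi=-1$ on $F_1$.
\item Define $G(z,t)=\bigl(f(z,t),\,\min\{1,\max\{0,t+\psi(z,t)\}\}\bigr)$. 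This is a continuous self-map of the compact convex set $C\times[0,1]$, so Brouwer gives a fixed point $(z,t)$. Then $z=f(z,t)$, so $(z,t)\in F$.
\item If $(z,t)\in F_0$, then $\psi=1$ forces $t=1$. But every point of $F$ at level $1$ lies in $F_1$, which is disjoint from $F_0$, a contradiction. Symmetrically, $(z,t)\in F_1$ forces $t=0$ and a point of $F_0$, again a contradiction.
\end{enumerate}
Hence the desired connected set exists in the continuous case.

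\textbf{Upper hemi-continuous case.} Since $h$ is upper hemi-continuous with nonempty compact convex values, Cellina's approximate-selection theorem supplies continuous $f_k:C\times[0,1]\to C$ whose graphs lie within distance $1/k$ of the graph of $h$.
\begin{enumerate}[label=(\roman*)]
\item Apply the continuous case to each $f_k$, obtaining connected sets $H_k$ of fixed points of $f_k$ that meet both levels.
\item Pick $(z_k,1)\in H_k$ and pass to a subsequence with $z_k\to z^\ast$. Then the topological $\liminf$ of the $H_k$ contains $(z^\ast,1)$ and is nonempty.
\item By Kuratowski's theorem, in a compact metric space the $\limsup$ of connected sets with nonempty $\liminf$ is connected. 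Set $H^c=\limsup_k H_k$.
\item Upper hemi-continuity with compact values gives a closed graph. Hence any limit of points $(z_k,t_k)$ with $z_k=f_k(z_k,t_k)$ satisfies $z\in h(z,t)$, so $H^c\subseteq H$.
\item Limits of the level-$0$ and level-$1$ points of the $H_k$ lie in $H^c$ by compactness. So $H^c$ meets both $C\times\{0\}$ and $C\times\{1\}$.
\end{enumerate}

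\textbf{Main obstacle.} I expect the main difficulty to be the limiting step, not the continuous case. It needs both the approximation theorem, which genuinely requires convex values (otherwise the statement can fail), and Kuratowski's lemma on connectedness of $\limsup$. I would verify Kuratowski's lemma directly, by contradiction. If $\limsup_k H_k$ splits into two disjoint nonempty closed pieces, then for large $k$ the set $H_k$ meets $\epsilon$-neighbourhoods of both pieces, because the $\liminf$ point attracts one piece and the other piece is also a limit. Connectedness of $H_k$ then forces points of $H_k$ in the gap between the neighbourhoods. A limit of such points lies in $\limsup_k H_k$ but in neither piece, which is a contradiction.
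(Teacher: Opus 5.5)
The paper gives no proof of this statement. It imports the result from Mas-Colell (1974) and uses it as a black box in the proof of Theorem~\ref{thm:lgnecnt}.

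Your proposal is a correct, self-contained proof along the classical route. First comes Browder's theorem for continuous maps, via your separation-plus-Brouwer argument on the augmented map $G$: a fixed point in $F_0$ forces $t=1$ and so lies in $F_1$, and symmetrically for $F_1$. Then comes Mas-Colell's extension, through Cellina's graph approximation and Kuratowski's theorem that the $\limsup$ of connected sets with nonempty $\liminf$ in a compact metric space is connected. Your closed-graph step correctly places $\limsup_k H_k$ inside $H$.

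Compared with the citation, your version makes explicit what the paper's statement silently omits. The map $h$ must be a correspondence with nonempty closed convex values, and the fixed-point condition must read $z\in h(z,t)$. Without convexity, $H$ can be empty: take $C=[0,1]$ and $h(z,t)=\{1\}$ for $z<\tfrac12$, $h(z,t)=\{0,1\}$ for $z=\tfrac12$, and $h(z,t)=\{0\}$ for $z>\tfrac12$.

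In the paper's application the hypothesis does hold, although the paper never checks it. Each dilated-entropy term $x\ln(x/y)$ is jointly convex, so the optimal-solution sets of \eqref{opt:etnex0} and of the $t=0$ linear program are convex.

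One minor imprecision is in your sketch of Kuratowski's lemma. The piece containing the $\liminf$ point is met by $H_k$ for all large $k$, but the other piece is met only for infinitely many $k$. Running the gap argument along those $k$ is enough, so nothing breaks.
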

	Let $\widetilde{\mathscr{S}}_D=\{(\gamma,\nu,t)\mid (\gamma,\nu,t) \text{ satisfies System~(\ref{eqt:etnesimx0}) with } 0<t\leq 1\}$ and $\mathscr{S}_D$ be the closure of $\widetilde{\mathscr{S}}_D$. By applying Theorem~\ref{thm:Mas-Colell}, we arrive at the following conclusion.
	\begin{theorem}\label{thm:lgnecnt}There is a connected component in $\mathscr{S}_D$ intersecting both $\mathbb{R}^{n_0}\times\mathbb{R}^{m_0}\times\{1\}$ and $\mathbb{R}^{n_0}\times\mathbb{R}^{m_0}\times\{0\}$.
	\end{theorem}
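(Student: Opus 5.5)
The plan is to apply Theorem~\ref{thm:Mas-Colell} to an upper hemi-continuous best-response-type correspondence on a compact convex set. The natural choice is the compact convex polytope $\Lambda$ of realization-plan profiles, with a map that at time $t$ returns the Nash-equilibrium response of $\Gamma_s^{e_0}(t)$. Two problems arise. First, Problem~\eqref{opt:etnex0} is generally nonconcave in $\gamma^i$, so its argmax need not be convex-valued. Second, at $t=0$ the regularizer vanishes. To avoid the first problem, I will define the map through the normal form. For $\hat\gamma\in\Lambda$ and $t\in(0,1]$, set
\[
h^i(\hat\gamma,t)=\gamma^i\bigl(\tilde\sigma^i(\hat\gamma,t)\bigr),
\]
where $\tilde\sigma^i(\hat\gamma,t)$ is the unique logit response~\eqref{qrenfnex0} with parameter $\lambda(t)$ against $\sigma^{-i}$ realization-equivalent to $\hat\gamma^{-i}$. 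This response is well defined because $u^i(s^i,\cdot)$ depends on $\hat\gamma^{-i}$ only through $g^i(\varpi^i,\hat\gamma^{-i})$, by the payoff identity quoted before Lemma~\ref{lemqre}.

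Equivalently, and better for continuity, I will write it as the sequence-form softmax
\[
h^i(\varpi^i_{I^j_i}a)=h^i(\varpi^i_{I^j_i})\,\frac{\gamma^{0i}(\varpi^i_{I^j_i}a)}{\gamma^{0i}(\varpi^i_{I^j_i})}\,\frac{\mathcal Z^i(\varpi^i_{I^j_i}a)}{\mathcal Z^i(I^j_i)},
\]
using the recursion~\eqref{recurge0} weighted by $\gamma^0$. This expression is single-valued and continuous in $(\hat\gamma,t)$ for $t>0$. To handle $t\to0$, I rewrite it in the form $\exp\bigl((1-t)g/t\bigr)$ and multiply by $t$ in the exponent.

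The cleanest way to handle $t=0$ is the standard trick of extending to a correspondence. For $t>0$ take the singleton above. At $t=0$ take the set of realization plans $\gamma(\sigma^i)$ with $\sigma^i$ a mixed best response to $\hat\gamma^{-i}$ in the normal form. This set is convex, since $\gamma(\cdot)$ is linear and the mixed best-response set is convex, and it is nonempty and compact. Upper hemi-continuity at $t=0$ holds because logit responses with $\lambda\to\infty$ concentrate on best responses, and $\gamma(\cdot)$ is continuous. Then Theorem~\ref{thm:Mas-Colell} yields a connected set $H^c\subset\Lambda\times[0,1]$ of fixed points meeting both $t=1$ and $t=0$. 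At $t=1$ the fixed point is $\gamma^0$, by Lemma~\ref{etnestarting}. At $t=0$ the fixed points are Nash equilibria in realization-plan form.

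Next I will transfer $H^c$ to $\mathscr{S}_D$. For $t\in(0,1]$, a fixed point $\gamma$ of $h(\cdot,t)$ is the realization plan of a logistic QRE. Hence, by Theorem~\ref{artsqre:thm1}, $\gamma$ is a Nash equilibrium of $\Gamma_s^{e_0}(t)$, and the remark after Theorem~\ref{artsqre:thm2} supplies multipliers. These are given explicitly by~\eqref{niuvalue}, namely $\nu^i_{I^j_i}=t(\ln\mathcal Z^i(I^j_i)-1)$, suitably adjusted for $\gamma^0$. This gives a continuous map $\Phi:(\gamma,t)\mapsto(\gamma,\nu(\gamma,t),t)$ on $t>0$, so the image of $H^c\cap\{t>0\}$ lies in $\widetilde{\mathscr{S}}_D$.

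The main obstacle is continuity and boundedness of $\nu$ as $t\to0$. Since $t\ln\mathcal Z^i$ behaves like a log-sum-exp with temperature $t$, it converges to $(1-t)$ times a maximum of sums of $g^i$ values, which is bounded. So $\Phi$ extends continuously to $t=0$ with $\nu$ given by the max-payoff values. The extension at $t=0$ is single-valued for fixed $\gamma$, but $\gamma$ may vary along $H^c$, so continuity there must be checked jointly in $(\gamma,t)$. Once this holds, $\Phi(H^c)$ is connected, it is contained in $\mathscr{S}_D$ because closure points at $t=0$ are limits from $t>0$, and it meets both $t=1$ and $t=0$. A subtlety is that $H^c\cap\{t=0\}$ might not be approached from $t>0$. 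I would handle this by taking the closure of $\Phi(H^c\cap\{t>0\})$. This closure is connected, and it meets $t=0$ because $H^c$ is connected and not contained in $\{t=0\}$, and because the $\nu$-values are bounded, so the set is compact. Finally, the connected component of $\mathscr{S}_D$ containing this set has the stated property.
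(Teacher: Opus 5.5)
Your route is valid and differs from the paper's mainly in how you verify the hypotheses of Mas-Colell's theorem.

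\textbf{Comparison with the paper's route.} The paper takes $\varphi(\hat\gamma,t)$ to be the argmax of Problem~\eqref{opt:etnex0} (and the sequence-form LP at $t=0$). It cites Fiacco for upper hemi-continuity and bounds $\nu$ through the backward-induction identity of Appendix~\ref{app:compactness_sl}. It never checks that $\varphi$ is convex-valued, which Mas-Colell's theorem needs for correspondences, even though it elsewhere calls this type of problem nonconcave.

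You instead proceed as follows:
\begin{itemize}
\item For $t>0$ you use the closed-form $\gamma^0$-weighted softmax, which is single-valued and continuous.
\item At $t=0$ you take the convex set of best-response realization plans.
\item Upper hemi-continuity at $t=0$ comes from logit weights concentrating on best responses.
\item The multipliers are explicit from~\eqref{niuvalue}, $\nu=t(\ln\mathcal Z-1)$. The log-sum-exp bound gives their boundedness, and indeed a joint continuous extension to $t=0$, without the appendix.
\end{itemize}

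In fact the two correspondences coincide. Each term $\gamma(\varpi a)\ln\bigl(\gamma(\varpi a)/\gamma(\varpi)\bigr)$ is the perspective of $x\ln x$. Hence the objective of~\eqref{opt:etnex0} is concave, and in fact strictly concave on $\Lambda^i$. Its unique interior maximizer is exactly your softmax. So your worry about nonconvex argmax sets is unfounded, but your route makes explicit what the paper leaves implicit.

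A minor correction: the softmax equals the normal-form logit response with reference $\sigma^0$ only when $\sigma^0=\sigma(\gamma^0)$. Since System~\eqref{eqt:etnesimx0} depends only on $\gamma^0$, define $h$ by the softmax, i.e.\ use reference $\sigma(\gamma^0)$. That reference is still totally mixed, so your concentration argument is unaffected.

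\textbf{The step that needs repair.} Your claim that the closure of $\Phi(H^c\cap\{t>0\})$ is connected does not go through as written. Deleting the $t=0$ slice from a connected set can disconnect it: a branch may descend to $t=0$ and climb back up. So $H^c\cap\{t>0\}$ need not be connected. The paper makes the same leap when it asserts that $\mathscr{S}^c_D$ is connected.

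A fix:
\begin{itemize}
\item Replace $H^c$ by its closure. This is a continuum inside the closed fixed-point set.
\item Fix a point $p_1$ of it at $t=1$.
\item For $\epsilon\in(0,1)$, let $K_\epsilon$ be the component of $\overline{H^c}\cap\{t\ge\epsilon\}$ containing $p_1$. By the boundary-bumping theorem for continua, $K_\epsilon$ meets $\{t=\epsilon\}$.
\item Then $\bigcup_{\epsilon}K_\epsilon$ is a connected subset of $\{t>0\}$ containing $p_1$, and its closure meets $\{t=0\}$.
\item The closure of its image under $\Phi$ is a connected subset of $\mathscr{S}_D$ meeting both levels. This last step uses only your bound on $\nu$.
\end{itemize}
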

	\begin{proof}
		For each $(\hat{\gamma},t)\in \Lambda\times[0,1]$, define the mapping $\varphi(\hat{\gamma},t)$ as the set of realization-plan profiles $\gamma=(\gamma^i:i\in N)$ such that, for each player $i\in N$, $\gamma^i$ solves Problem~\eqref{opt:etnex0} when $t\in (0,1]$, and solves the following optimization problem when $t=0$
		\begin{align*}
			\max\limits_{\gamma^i} & \quad\sum\limits_{j\in M_i}\sum\limits_{a\in A(I^j_i)}\gamma^i(\varpi^i_{I^j_i}a)g^i(\varpi^i_{I^j_i}a,\hat\gamma^{-i})\\
			\text{s.t.} & \quad\sum\limits_{a\in A(I^j_i)}\gamma^i(\varpi^i_{I^j_i}a)-\gamma^i(\varpi^i_{I^j_i})=0,\;j\in M_i.
		\end{align*}
		By Theorem 2.2.2 of Fiacco~\cite{FiaccoIntroductionSensitivityStability1983}, $\varphi(\gamma, t)$ is an upper hemi-continuous mapping from $\Lambda \times [0,1]$ to $\Lambda$. Let $\mathscr{E}=\{(\gamma,t)\in\Lambda\times[0,1]\mid\varphi(\gamma,t)=\gamma\}$. Theorem~\ref{thm:Mas-Colell} then guarantees the existence of a connected component in $\mathscr{E}$ that intersects both $\mathbb{R}^{n_0}\times\{1\}$ and $\mathbb{R}^{n_0}\times\{0\}$. We denote this component by $\mathscr{E}^c$, and denote its restriction to $t>0$ by
		$\widetilde{\mathscr{E}}^c$.
		
		For any $(\gamma,t)\in \widetilde{\mathscr{E}}^c$, there exists a unique $\nu=(\nu^i_{I^j_i}:i\in N,j\in M_i)$ such that System (\ref{eqt:etnesimx0}) is satisfied. Let $\widetilde{\mathscr{S}}^c_D=\{(\gamma,\nu,t)\in \widetilde{\mathscr{S}}_D\mid(\gamma,t)\in \widetilde{\mathscr{E}}^c\}$ and $\mathscr{S}^c_D$ be the closure of $\widetilde{\mathscr{S}}^c_D$. We obtain from the above discussion that $\mathscr{S}^c_D$ constitutes a connected component within $\mathscr{S}_D$ intersecting $\mathbb{R}^{n_0}\times\mathbb{R}^{m_0}\times\{1\}$. Considering a convergent sequence $\{(\gamma(t_k), t_k)\}^\infty_{k=1} \subseteq \widetilde{\mathscr{E}}^c$ with $\lim_{k\to\infty} t_k = 0$, we associate each $(\gamma(t_k), t_k)$ with the corresponding $\nu(t_k)$, which is bounded as shown in Appendix~\ref{app:compactness_sl}. The boundedness of $\{(\gamma(t_k),\nu(t_k),t_k)\}^\infty_{k=1}\subseteq \widetilde{\mathscr{S}}^c_D$ guarantees that it has a convergent subsequence. Thus, $\mathscr{S}^c_D$ intersects with $\mathbb{R}^{n_0}\times\mathbb{R}^{m_0}\times\{0\}$. This completes the proof.
	\end{proof}
	According to Lemma~\ref{etnestarting}, the connected component described in Theorem~\ref{thm:lgnecnt} is unique and intersects the $t=1$ level at $(\gamma^*(1),\nu^*(1),1)$. Let $\alpha=(\alpha(\varpi^i_{I^j_i}a):i\in N, j\in M_i, a\in A(I^j_i))\in\mathbb{R}^{n_0}$ be an arbitrary vector with sufficiently small $\|\alpha\|$. To realize a smooth path, System~(\ref{eqt:etnesimx0}) is accordingly modified. Specifically, we subtract the expression $t(1-t)\alpha$ from the left-hand side of the first group of equations, resulting in a new system. Let $p(\gamma,\nu,t;\alpha)$ represent the left-hand sides of the equations in the newly obtained system. When treating $\alpha$ as a constant, we define $p_\alpha(\gamma,\nu,t) = p(\gamma,\nu,t;\alpha)$. This gives rise to the following theorem.
	\begin{theorem}\label{thm:lgnesmp}Given almost any $\alpha\in\mathbb{R}^{n_0}$ with sufficiently small $\|\alpha\|$, there exists a smooth path in $\mathscr{S}_D$ that starts from $(\gamma^*(1),\nu^*(1),1)$ on the level of $t=1$ and leads to a Nash equilibrium as $t\to 0$.
	\end{theorem}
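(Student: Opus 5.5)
The argument is the standard transversality (parametric Sard) route used for logarithmic-barrier homotopies. We work on the open domain $\mathcal{D}=\{(\gamma,\nu,t)\mid \gamma\in\Lambda_{++},\ t\in(0,1)\}$. Here $\gamma$ ranges only over the free sequence variables, with $\gamma^i(\emptyset)=1$ and the flow constraints kept as equations, so that $p_\alpha$ maps an open subset of $\mathbb{R}^{n_0}\times\mathbb{R}^{m_0}\times(0,1)$ into $\mathbb{R}^{n_0+m_0}$. On $\mathcal{D}$ the map $p$ is smooth, because the $\ln$ terms are evaluated only at positive arguments.

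\textbf{Step 1 (regularity via Sard).} We first show that $D_\alpha p(\gamma,\nu,t;\alpha)$ has full row rank on the first block. Its derivative with respect to $\alpha(\varpi^i_{I^j_i}a)$ is $-t(1-t)$ times a unit vector, which is nonzero for $t\in(0,1)$. The flow-constraint rows are differentiated with respect to the leaf variables $\gamma^i(\varpi^i_{I^j_i}a)$, and together these give a Jacobian of $p$ in $(\gamma,\nu,\alpha)$ of full row rank $n_0+m_0$ everywhere on $\mathcal{D}$. By the parametric transversality theorem, for almost every $\alpha$ the value $0$ is a regular value of $p_\alpha$ on $\mathcal{D}$. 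Hence $p_\alpha^{-1}(0)\cap\mathcal{D}$ is a smooth one-dimensional manifold.

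\textbf{Step 2 (the starting point and boundary behaviour).} At $t=1$ the perturbation $t(1-t)\alpha$ vanishes, so by Lemma~\ref{etnestarting} the solution is unique and equals $(\gamma^*(1),\nu^*(1),1)$. To show the curve leaves $t=1$ transversally, we check that the Jacobian of $p_\alpha(\cdot,\cdot,1)$ in $(\gamma,\nu)$ is nonsingular at this point. At $t=1$ the system is the KKT system of a strictly concave relative dilated entropy maximization. Dilated entropy is strictly convex on the realization-plan polytope, so the Hessian block is negative definite on the tangent space of the flow constraints, and the constraint Jacobian has full rank. The bordered KKT matrix is therefore invertible. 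By the implicit function theorem a unique smooth branch emanates from $(\gamma^*(1),\nu^*(1),1)$ into $t<1$. Because the solution at $t=1$ is unique, the component $\mathscr{P}$ containing this branch cannot be a loop and cannot return to $t=1$.

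\textbf{Step 3 (boundedness).} For small $\|\alpha\|$ the solutions with $t\in[\epsilon,1]$ are bounded and remain bounded away from $\partial\Lambda$. The reason is that the perturbed system is still the KKT system of a dilated-entropy problem whose payoff is shifted by $\frac{t(1-t)}{1-t}\alpha$, so the explicit formula~\eqref{eqt:qresf} with $\mathcal{Z}^i$ still applies. It gives $\gamma^i(\varpi^i_{I^j_i}a)/\gamma^i(\varpi^i_{I^j_i})\ge c(\epsilon)>0$. The multipliers satisfy $\nu^i_{I^j_i}=t(\ln\mathcal{Z}^i-1)$ as in~\eqref{niuvalue}, and are therefore bounded; this is where the appendix bound on $\nu$ is reused. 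Consequently $\mathscr{P}$ cannot escape to infinity or to the boundary of $\Lambda$ at any positive $t$. Being a one-dimensional manifold with its single endpoint at $t=1$, it must approach $t=0$, and its closure contains points at $t=0$. By the $t=0$ case of Theorem~\ref{thm:lgnecnt}, the limit $\gamma$ solves the sequence-form best-response problem against itself, where the $\alpha$-term vanishes as $t\to0$. Thus $\sigma(\gamma)$ is a Nash equilibrium.

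I expect the main obstacle to be Step 3, showing that the perturbed $\nu$ stays bounded and that $\gamma$ stays interior uniformly on $[\epsilon,1]$. The point to settle is whether the backward-induction formula survives the $\alpha$-shift; absorbing $t\alpha$ into the payoff, as described above, is what I would try first.
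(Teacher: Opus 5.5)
Your argument is correct. Step 1 is the paper's transversality argument, but Steps 2 and 3 take a different route.

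\textbf{How the routes differ.} The paper gets the connection to $t=0$ from Theorem~\ref{thm:lgnecnt}: Mas-Colell's theorem for the \emph{unperturbed} best-response correspondence, plus the $\nu$-bound of Appendix~\ref{app:compactness_sl}. It then asserts that the implicit function theorem turns that component into a smooth path. It never explains how a component at $\alpha=0$ controls $p_\alpha^{-1}(0)$ for $\alpha\neq0$; presumably ``sufficiently small $\|\alpha\|$'' is meant to cover this.

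You work on $p_\alpha^{-1}(0)$ directly. Since $(1-t)g^i-t(1-t)\alpha=(1-t)(g^i-t\alpha)$, the backward induction behind \eqref{niuvalue} goes through unchanged. This gives interiority and bounded $\nu$ uniformly on $t\in[\epsilon,1]$, and the classification of one-manifolds then forces the branch to accumulate at $t=0$. Your route is self-contained, avoids the fixed-point theorem, and needs no smallness of $\|\alpha\|$. The paper's route, if made rigorous, would add that the traced path stays near the genuine logit-QRE component. That matters for equilibrium selection but not for the statement as written.

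At $t=1$ your concavity argument is sounder than Appendix~\ref{app:fullrowrank}. The appendix deduces nonsingularity of $BG^{-1}B^\top$ merely from $G$ being nonsingular and $B$ having full row rank, and that inference fails for nonsymmetric $G$.

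\textbf{Repairs to your version.}

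First, concavity is legitimate despite the paper's remark that Problem~\eqref{opt:etne} is ``generally nonconcave''. Each term $\gamma^i(\varpi^i_{I^j_i}a)\ln\bigl(\gamma^i(\varpi^i_{I^j_i}a)/\gamma^i(\varpi^i_{I^j_i})\bigr)$ is a perspective of $x\ln x$, hence jointly convex.

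Second, strict convexity alone does not give a nonsingular Hessian, so compute it. Let $x$ be the vector $(\gamma^i(\varpi^i_{I^j_i}a))_{a\in A(I^j_i)}$, let $y=\gamma^i(\varpi^i_{I^j_i})$, and let $z=x/y$. The block for $I^j_i$ contributes $\frac{1}{y}(d_x-z\,d_y)^\top\mathrm{diag}(1/z)(d_x-z\,d_y)$. Since $\gamma^i(\emptyset)=1$ is fixed, induction down the tree shows the form is definite.

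Third, \eqref{eqt:etnesimx0} is the KKT system after simplification with the flow constraints. Its $\gamma$-Jacobian therefore differs from the Lagrangian Hessian by a term of the form $EB$, which leaves the kernel of the bordered matrix unchanged.

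Fourth, the limit step should invoke Berge's maximum theorem for the perturbed best-response problem, not Theorem~\ref{thm:lgnecnt}, which concerns $\alpha=0$. Berge applies because KKT points are global maximizers by concavity, and the entropy and $t\alpha$ terms vanish uniformly as $t\to0$.

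Finally, as in the paper's own proof, the path you obtain lies in $p_\alpha^{-1}(0)$, not literally in $\mathscr{S}_D$.
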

	\begin{proof}
		The second group of equations and inequality constraints in System~(\ref{eqt:etnesimx0}) reveals that the elements in $\mathscr{S}_D$ satisfy $\gamma\in\text{int}(\Lambda)$ for $t\in(0,1)$. With the continuous differentiability of $p(\gamma,\nu,t;\alpha)$ on $\text{int}(\Lambda)\times\mathbb{R}^{m_0}\times(0,1)\times\mathbb{R}^{n_0}$, we have proved in Appendix~\ref{app:fullrowrank} that the Jacobian matrix of $p(\gamma,\nu,t;\alpha)$ is of full-row rank in this region. As an application of the transversality theorem outlined by Eaves and Schmedders~\cite{EavesGeneralequilibriummodels1999}, it can be shown that zero is a regular value of $p_\alpha(\gamma,\nu,t)$ over $\text{int}(\Lambda)\times\mathbb{R}^{m_0}\times(0,1)$ for almost any $\alpha$. We fix $\alpha$ such that zero is a regular value of $p_\alpha(\gamma,\nu,t)$ over $\text{int}(\Lambda)\times\mathbb{R}^{m_0}\times(0,1)$. By applying the implicit function theorem, the component described in Theorem~\ref{thm:lgnecnt} defines a smooth path, originating at $(\gamma^*(1),\nu^*(1),1)$ when $t=1$ and terminates at $t=0$. In Appendix~\ref{app:fullrowrank}, we demonstrate that, at $t=1$, zero remains a regular value of $p_\alpha(\gamma,\nu,1)$ in $\text{int}(\Lambda)\times\mathbb{R}^{m_0}$. This implies that the smooth path does not intersect tangentially with $\mathbb{R}^{n_0}\times\mathbb{R}^{m_0}\times\{1\}$. By the equilibrium-selection property of logit QRE and Theorem~\ref{thm:lgnecnt}, it follows that this smooth path ultimately yields a Nash equilibrium. This completes the proof.
	\end{proof}
	To reformulate System~\eqref{eqt:etnesimx0} without logarithmic functions, we introduce an exponential transformation on variables in the following. For $v\in\mathbb{R}$, let 
	\[
	\phi(v) =
	\begin{cases}
		e^{1 - \frac{1}{v}}, & \text{if } v > 0, \\
		0, & \text{if } v \leq 0,
	\end{cases}
	\qquad\text{with}\quad
	\frac{d}{dv} \phi(v) =
	\begin{cases}
		\frac{e^{1 - \frac{1}{v}}}{v^2}, & \text{if } v > 0, \\
		0, & \text{if } v \leq 0.
	\end{cases}
	\]
	Clearly, $\phi(v)$ is continuously differentiable on $\mathbb{R}$. Furthermore, $\phi(v)$ is a strictly increasing function on $[0, \infty)$ with $\phi(1) = 1$. Let $y=(y^i(\varpi^i_{I^j_i}a):i\in N,j\in M_i,a\in A(I^j_i))\in \mathbb{R}^{n_0}$. We set $\gamma^i(y;\varpi^i_{I^j_i}a) = \phi(y^i(\varpi^i_{I^j_i}a)), i\in N,j\in M_i,a\in A(I^j_i)$. Substituting $\gamma(y)=(\gamma^i(y;\varpi^i_{I^j_i}a):i\in N,j\in M_i,a\in A(I^j_i))$ into System~(\ref{eqt:etnesimx0}) for $\gamma$ and subtracting the expression $t(1-t)\alpha$, we obtain
	\begin{equation}\label{eqt:etneexptrans0}\begin{aligned}
			& (1-t)g^i(\varpi^i_{I^j_i}a,\gamma^{-i}(y))-t(\ln\gamma^{0i}(\varpi^i_{I^j_i})-\ln\gamma^{0i}(\varpi^i_{I^j_i}a))\\
			& \hspace{2.2cm}+t/y^i(\varpi^i_{I^j_i}a)-t/y^i(\varpi^i_{I^j_i})-t(1-m_i(\varpi^i_{I_i^j}a))-\nu^i_{I^j_i} + \zeta^i_{I^j_i}(a)\\
			& \hspace{5.2cm}-t(1-t)\alpha(\varpi^i_{I^j_i}a) = 0,\;i\in N,j\in M_i,a\in A(I^j_i),\\
			& \sum\limits_{a\in A(I^j_i)}\gamma^i(y;\varpi^i_{I^j_i}a)-\gamma^i(y;\varpi^i_{I^j_i})=0,\;i\in N,j\in M_i,\; y^i(\varpi^i_{I^j_i}a)>0,\; i\in N,j\in M_i,a\in A(I^j_i),
		\end{aligned}
	\end{equation}
	where $y^i(\varpi^i_\emptyset) = 1$. We next introduce two methods for handling fractional terms and inequalities in System~\eqref{eqt:etneexptrans0}, thereby obtaining different transformations that preserve path equivalence.
	\paragraph{Method 1}
	We introduce an additional variable transformation as follows. Given $\kappa_0>1$, let $\psi_0(v;\kappa_0)=\big((v+\sqrt{v^2})/2\big)^{\kappa_0}$, which is continuously differentiable on $\mathbb{R}$. For $x=(x^i(\varpi^i_{I^j_i}a):i\in N,j\in M_i,a\in A(I^j_i))\in \mathbb{R}^{n_0}$, we define $y(x)=(y^i(x;\varpi^i_{I^j_i}a):i\in N,j\in M_i,a\in A(I^j_i))$ with $y^i(x;\varpi^i_{I^j_i}a)=\psi_0(x^i(\varpi^i_{I^j_i}a);\kappa_0),\,i\in N,j\in M_i,a\in A(I^j_i)$. Multiplying each equation in the first block of System~\eqref{eqt:etneexptrans0} by its corresponding factor $y^i(\varpi^i_{I^j_i}a)y^i(\varpi^i_{I^j_i})$, and subsequently replacing $y$ with $y(x)$, yields the following equivalent system
	\begin{equation}\label{eqt:etneexptrans02}\begin{aligned}
			& y^i(x;\varpi^i_{I^j_i}a)y^i(x;\varpi^i_{I^j_i})\Bigl((1-t)g^i(\varpi^i_{I^j_i}a,\gamma^{-i}(y(x)))-t(\ln\gamma^{0i}(\varpi^i_{I^j_i})-\ln\gamma^{0i}(\varpi^i_{I^j_i}a))\\
			& \hspace{1.2cm}-t(1-m_i(\varpi^i_{I_i^j}a))-\nu^i_{I^j_i} + \zeta^i_{I^j_i}(a)-t(1-t)\alpha(\varpi^i_{I^j_i}a) \Bigr)\\
			& \hspace{3.2cm}+t\big(y^i(x;\varpi^i_{I^j_i})-y^i(x;\varpi^i_{I^j_i}a)\big)= 0,\;i\in N,j\in M_i,a\in A(I^j_i),\\
			& \sum\limits_{a\in A(I^j_i)}\gamma^i(y(x);\varpi^i_{I^j_i}a)-\gamma^i(y(x);\varpi^i_{I^j_i})=0,\;i\in N,j\in M_i.
		\end{aligned}
	\end{equation}
	At $t=1$, the system has a unique solution given by $(x^*(1), \nu^*(1))$ with $x^*(1;\varpi^i_{I^j_i}a)=(1-\ln\gamma^{0i}(\varpi^i_{I^j_i}a))^{1/\kappa_0}$ for $i\in N,j\in M_i,a\in A(I^j_i)$, and $\nu^{*i}_{I^j_i}(1) = -1$ for $i\in N,j\in M_i$.
	\paragraph{Method 2}
	Given $\tau_0>0$ and $\kappa_0>2$, define $\psi_1(v,r;\tau_0,\kappa_0)=\big((v+\sqrt{v^2+4\tau_0r})/2\big)^{\kappa_0}$ and $\psi_2(v,r;\tau_0,\kappa_0)=\big((-v+\sqrt{v^2+4\tau_0r})/2\big)^{\kappa_0}$. It follows that $\psi_1(v,r;\tau_0,\kappa_0)\psi_2(v,r;\tau_0,\kappa_0)=(\tau_0r)^{\kappa_0}$. Since $\kappa_0>2$, $\psi_1(v,r;\tau_0,\kappa_0)$ and $\psi_2(v,r;\tau_0,\kappa_0)$ are both continuously differentiable on $\mathbb{R}\times[0,\infty)$. Replacing $t/y^i(\varpi^i_{I^j_i}a)$ with $\xi^i(\varpi^i_{I^j_i}a)$ in System \eqref{eqt:etneexptrans0}, we arrive at an equivalent system. Furthermore, for $x=(x^i(\varpi^i_{I^j_i}a):i\in N,j\in M_i,a\in A(I^j_i))\in \mathbb{R}^{n_0}$, we define $y(x,t)=(y^i(x,t;\varpi^i_{I^j_i}a):i\in N,j\in M_i,a\in A(I^j_i))$ and $\xi(x,t)=(\xi^i(x,t;\varpi^i_{I^j_i}a):i\in N,j\in M_i,a\in A(I^j_i))$, where $y^i(x,t;\varpi^i_{I^j_i}a)=\psi_1(x^i(\varpi^i_{I^j_i}a),t^{1/\kappa_0}; 1, \kappa_0)$ and $\xi^i(x,t;\varpi^i_{I^j_i}a)=\psi_2(x^i(\varpi^i_{I^j_i}a),t^{1/\kappa_0};1,\kappa_0),\, i\in N,j\in M_i,a\in A(I^j_i)$. Through the utilization of $y(x,t)$ and $\xi(x,t)$, System~(\ref{eqt:etneexptrans0}) can be equivalently reformulated as
	\begin{equation}\label{eqt:etnesqrttrans}\begin{aligned}
			& (1-t)g^i(\varpi^i_{I^j_i}a,\gamma^{-i}(y(x,t)))-t(\ln\gamma^{0i}(\varpi^i_{I^j_i})-\ln\gamma^{0i}(\varpi^i_{I^j_i}a))+\xi^i(x,t;\varpi^i_{I^j_i}a)-\xi^i(x,t;\varpi^i_{I^j_i})\\
			& \hspace{1.4cm}-t(1-m_i(\varpi^i_{I_i^j}a))-\nu^i_{I^j_i}+ \zeta^i_{I^j_i}(a)-t(1-t)\alpha(\varpi^i_{I^j_i}a) = 0,\;i\in N,j\in M_i,a\in A(I^j_i),\\
			& \sum\limits_{a\in A(I^j_i)}\gamma^i(y(x,t);\varpi^i_{I^j_i}a)-\gamma^i(y(x,t);\varpi^i_{I^j_i})=0,\;i\in N,j\in M_i.
		\end{aligned}
	\end{equation}
	At $t=1$, the system admits a unique solution $(x^*(1), \nu^*(1))$ given by $x^{*i}(1;\varpi^i_{I^j_i}a) = (1-\ln\gamma^{0i}(\varpi^i_{I^j_i}a))^{-1/\kappa_0}-(1-\ln\gamma^{0i}(\varpi^i_{I^j_i}a))^{1/\kappa_0}$ for $i\in N,j\in M_i,a\in A(I^j_i)$, and $\nu^{*i}_{I^j_i}(1) = -1$ for $i\in N,j\in M_i$.
	
	\section{An Equivalent Smooth Path}\label{equsmp}
	
	By the constraints~\eqref{qre-equ-pre3}, for $i\in N$ and $j\in M_i$, we have $\sum_{a \in A(I_i^j)} \gamma^i(\varpi^i_{I_i^j} a) \big( \ln \gamma^i(\varpi^i_{I_i^j})- \ln \gamma^{0i}(\varpi^i_{I_i^j})\big)=\gamma^i(\varpi^i_{I_i^j} ) \big( \ln \gamma^i(\varpi^i_{I_i^j})- \ln \gamma^{0i}(\varpi^i_{I_i^j})\big)$. Using this identity, the dilated-entropy terms in Problem~\eqref{opt:etnex0} can be equivalently expressed as a weighted sum of standard entropy terms. Accordingly, Problem~\eqref{opt:etnex0} can be equivalently rewritten as
	\begin{equation}
		\label{opt:glet}
		\begin{aligned}
			\max\limits_{\gamma^i} 
			&\quad (1-t) \sum\limits_{j\in M_i} \sum\limits_{a \in A(I_i^j)} 
			\gamma^i(\varpi^i_{I_i^j} a) \, g^i(\varpi^i_{I_i^j} a, \hat{\gamma}^{-i}) \\
			&\quad- t \sum\limits_{j \in M_i} \sum\limits_{a \in A(I_i^j)} 
			(1-m_i(\varpi^i_{I_i^j}a))\gamma^i(\varpi^i_{I_i^j} a) 
			\bigl( \ln \gamma^i(\varpi^i_{I_i^j} a) - \ln \gamma^{0i}(\varpi^i_{I_i^j} a)\bigr) \\
			\text{s.t.}
			&\quad \sum\limits_{a \in A(I_i^j)} \gamma^i(\varpi^i_{I_i^j} a) - \gamma^i(\varpi^i_{I_i^j}) = 0, 
			\; j \in M_i.
		\end{aligned}
	\end{equation}	
	The possible negativity of $1-m_i(\varpi^i_{I_i^j}a)$ implies that the objective need not be concave. Through the application of the optimality conditions to Problem~(\ref{opt:glet}) and the fixed-point condition $\hat{\gamma} = \gamma$, we obtain the following system,
	\begin{equation}\label{eqt:glet}\begin{aligned}
			& (1-t)g^i(\varpi^i_{I^j_i}a,\gamma^{-i})-t(1-m_i(\varpi^i_{I_i^j}a))\bigl(\ln\gamma^i(\varpi^i_{I^j_i}a)- \ln \gamma^{0i}(\varpi^i_{I_i^j} a)+1\bigr)\\
			& \hspace{5cm}-\nu^i_{I^j_i} + \zeta^i_{I^j_i}(a) = 0,\;i\in N,j\in M_i,a\in A(I^j_i),\\
			& \sum\limits_{a\in A(I^j_i)}\gamma^i(\varpi^i_{I^j_i}a)-\gamma^i(\varpi^i_{I^j_i})=0,\;i\in N,j\in M_i,\; 0<\gamma^i(\varpi^i_{I^j_i}a),\;i\in N,j\in M_i,a\in A(I^j_i),
		\end{aligned}
	\end{equation}
	where $\zeta^i_{I^j_i}(a)=\sum_{{j_q}\in M_i(\varpi^i_{I^j_i}a)}\nu^i_{I^{j_q}_i}$. The argument developed in the proof of Theorem~\ref{sufqre} extends directly to the present setting, yielding the following result.
	\begin{theorem}\label{artsqre:thm3}
			Any solution $\gamma^*$ of System~\eqref{eqt:glet} induces a logistic QRE $\sigma(\gamma^*)$ defined by~\eqref{qrenfnex0} with the rationality parameter $\lambda(t)$.
	\end{theorem}
	This result establishes the sufficiency of System~\eqref{eqt:glet} for inducing the corresponding logistic QRE. Subtracting $t(1-t)\alpha$ from the left-hand side of the first group of equations in System~\eqref{eqt:glet} yields a new system; let $\widetilde{\mathscr{S}}_E$ denote the set of all triples $(\gamma,\nu,t)$ satisfying this system for $0<t\leq1$, and let $\mathscr{S}_E$ denote its closure. An analogue of Theorem~\ref{thm:lgnesmp} holds: for almost every $\alpha\in\mathbb{R}^{n_0}$ with sufficiently small $\|\alpha\|$, there exists a smooth path in $\mathscr{S}_E$ that originates from $(\gamma^*(1),\nu^*(1),1)$ at $t=1$, as given in Lemma~\ref{etnestarting}, and converges to a Nash equilibrium as $t\to0$.

	We next treat the logarithmic terms in~\eqref{eqt:glet} following an approach analogous to that used in the preceding subsection. Define $Q_i=\{(j,a)\mid j\in M_i,a\in A(I^j_i),m_i(\varpi^i_{I^j_i}a)\neq 1\}$ for each $i\in N$. We set $\gamma^i(y;\varpi^i_{I^j_i}a) = \phi(y^i(\varpi^i_{I^j_i}a))$ for $i\in N,(j,a)\in Q_i$, and $\gamma^i(y;\varpi^i_{I^j_i}a) = y^i(\varpi^i_{I^j_i}a)$ for $i\in N,(j,a)\notin Q_i$. Substituting $\gamma(y)=(\gamma^i(y;\varpi^i_{I^j_i}a):i\in N,j\in M_i,a\in A(I^j_i))$ for $\gamma$ in System~(\ref{eqt:glet}) and subtracting the perturbation term $t(1-t)\alpha$, we obtain
	\begin{equation}\label{eqt:gletexptrans0}\begin{aligned}
			& (1-t)g^i(\varpi^i_{I^j_i}a,\gamma^{-i}(y))-t(1-m_i(\varpi^i_{I_i^j}a))\bigl(2-\ln\gamma^{0i}(\varpi^i_{I^j_i}a)-1/y^i(\varpi^i_{I^j_i}a)\bigr)\\
			& \hspace{4.5cm}-\nu^i_{I^j_i}+ \zeta^i_{I^j_i}(a)-t(1-t)\alpha(\varpi^i_{I^j_i}a) = 0,\;i\in N,(j,a)\in Q_i,\\
			& (1-t)g^i(\varpi^i_{I^j_i}a,\gamma^{-i}(y))-\nu^i_{I^j_i}+ \zeta^i_{I^j_i}(a)-t(1-t)\alpha(\varpi^i_{I^j_i}a) = 0,\;i\in N,(j,a)\notin Q_i,\\
			& \sum\limits_{a\in A(I^j_i)}\gamma^i(y;\varpi^i_{I^j_i}a)-\gamma^i(y;\varpi^i_{I^j_i})=0,\;i\in N,j\in M_i,\;y^i(\varpi^i_{I^j_i}a)>0,\; i\in N,(j,a)\in Q_i.\\
		\end{aligned}
	\end{equation}
	Applying the approach used in the preceding subsection to the fractional terms and inequalities yields the following derivation.
	\paragraph{Method 3}
	For $x=(x^i(\varpi^i_{I^j_i}a):i\in N,j\in M_i,a\in A(I^j_i))\in \mathbb{R}^{n_0}$, we define $y(x)=(y^i(x;\varpi^i_{I^j_i}a):i\in N,j\in M_i,a\in A(I^j_i))$ with the components being $y^i(x;\varpi^i_{I^j_i}a)=\psi_0(x^i(\varpi^i_{I^j_i}a);\kappa_0)$ for $i\in N,(j,a)\in Q_i$ and $y^i(x;\varpi^i_{I^j_i}a)=x^i(\varpi^i_{I^j_i}a)$ for $i\in N,(j,a)\notin Q_i$. After multiplying each equation in the first block of System~\eqref{eqt:gletexptrans0} by its associated factor $y^i(\varpi^i_{I_i^j}a)$ and substituting $y(x)$ for $y$, we obtain the following equivalent system
	\begin{equation}\label{eqt:etneexptrans12}\begin{aligned}
			& y^i(x;\varpi^i_{I^j_i}a)\Bigl((1-t)g^i(\varpi^i_{I^j_i}a,\gamma^{-i}(y(x)))-t(1-m_i(\varpi^i_{I_i^j}a))\bigl(2-\ln\gamma^{0i}(\varpi^i_{I^j_i}a)\bigr)\\
			& \hspace{2.0cm}-\nu^i_{I^j_i}+ \zeta^i_{I^j_i}(a)-t(1-t)\alpha(\varpi^i_{I^j_i}a)\Bigr) + t(1-m_i(\varpi^i_{I_i^j}a))= 0,\;i\in N,(j,a)\in Q_i,\\
			& (1-t)g^i(\varpi^i_{I^j_i}a,\gamma^{-i}(y(x)))-\nu^i_{I^j_i}+ \zeta^i_{I^j_i}(a)-t(1-t)\alpha(\varpi^i_{I^j_i}a) = 0,\;i\in N,(j,a)\notin Q_i,\\
			& \sum\limits_{a\in A(I^j_i)}\gamma^i(y(x);\varpi^i_{I^j_i}a)-\gamma^i(y(x);\varpi^i_{I^j_i})=0,\;i\in N,j\in M_i.\\
		\end{aligned}
	\end{equation}
	At $t=1$, the system has a unique solution given by $(x^*(1), \nu^*(1))$ with $x^*(1;\varpi^i_{I^j_i}a)=(1-\ln\gamma^{0i}(\varpi^i_{I^j_i}a))^{-1/\kappa_0}$ for $i\in N,(j,a)\in Q_i$, $x^*(1;\varpi^i_{I^j_i}a)=\gamma^{0i}(\varpi^i_{I^j_i}a)$ for $i\in N,(j,a)\notin Q_i$, and $\nu^{*i}_{I^j_i}(1) = -1$ for $i\in N,j\in M_i$.
	\paragraph{Method 4}
	By replacing $t/y^i(\varpi^i_{I^j_i}a)$ with $\xi^i(\varpi^i_{I^j_i}a)$ for $i\in N,(j,a)\in Q_i$ in System \eqref{eqt:etneexptrans0}, we obtain an equivalent system. For $x=(x^i(\varpi^i_{I^j_i}a):i\in N,j\in M_i,a\in A(I^j_i))\in \mathbb{R}^{n_0}$, define $y(x,t)=(y^i(x,t;\varpi^i_{I^j_i}a):i\in N,j\in M_i,a\in A(I^j_i))$ and $\xi(x,t)=(\xi^i(x,t;\varpi^i_{I^j_i}a):i\in N,(j,a)\in Q_i)$, where $y^i(x,t;\varpi^i_{I^j_i}a)=\psi_1(x^i(\varpi^i_{I^j_i}a),t^{1/\kappa_0}; 1, \kappa_0)$, $\xi^i(x,t;\varpi^i_{I^j_i}a)=\psi_2(x^i(\varpi^i_{I^j_i}a),t^{1/\kappa_0};1,\kappa_0)$ for $i\in N,(j,a)\in Q_i$, and $y^i(x,t;\varpi^i_{I^j_i}a)=x^i(\varpi^i_{I^j_i}a)$ for $i\in N,(j,a)\notin Q_i$. Accordingly, System~\eqref{eqt:gletexptrans0} can be written equivalently as follows
	\begin{equation}\label{eqt:gletsqrttrans}\begin{aligned}
			& (1-t)g^i(\varpi^i_{I^j_i}a,\gamma^{-i}(y(x,t)))-t(1-m_i(\varpi^i_{I_i^j}a))\bigl(2-\ln\gamma^{0i}(\varpi^i_{I^j_i}a)\bigr)-\nu^i_{I^j_i}+ \zeta^i_{I^j_i}(a)\\
			& \hspace{1cm}+(1-m_i(\varpi^i_{I_i^j}a))\xi^i(x,t;\varpi^i_{I^j_i}a)-t(1-t)\alpha(\varpi^i_{I^j_i}a) = 0,\;i\in N,j\in M_i,a\in A(I^j_i),\\
			& \sum\limits_{a\in A(I^j_i)}\gamma^i(y(x,t);\varpi^i_{I^j_i}a)-\gamma^i(y(x,t);\varpi^i_{I^j_i})=0,\;i\in N,j\in M_i.
		\end{aligned}
	\end{equation}
	At $t=1$, System~\eqref{eqt:gletsqrttrans} admits a unique solution $(x^*(1), \nu^*(1))$ given by $x^{*i}(1;\varpi^i_{I^j_i}a) = (1-\ln\gamma^{0i}(\varpi^i_{I^j_i}a))^{-1/\kappa_0}-(1-\ln\gamma^{0i}(\varpi^i_{I^j_i}a))^{1/\kappa_0}$ for $i\in N,(j,a)\in Q_i$, $x^{*i}(1;\varpi^i_{I^j_i}a) = \gamma^{0i}(\varpi^i_{I^j_i}a)$ for $i\in N,(j,a)\notin Q_i$, and $\nu^i_{I^j_i}(1) = -1$ for $i\in N,j\in M_i$.
	
	\section{Numerical Performance}\label{qre-sec-prm5}
	
	We evaluate the proposed sequence-form formulation and its associated path-following methods in two respects. First, we apply the methods to classical extensive-form games to illustrate the resulting logit-QRE paths and to explain their equilibrium-selection behavior. Second, we compare the computational performance of the proposed methods on randomly generated extensive-form games. Our numerical experiments consider four formulations, namely, Systems~\eqref{eqt:etneexptrans02}, \eqref{eqt:etnesqrttrans}, \eqref{eqt:etneexptrans12}, and \eqref{eqt:gletsqrttrans}, which are denoted by QM1, QM2, QM3, and QM4, respectively. For each formulation, we employ the predictor--corrector method to trace its associated solution path numerically. At each continuation step, the predictor generates an initial approximation to the subsequent point on the path, and the corrector then refines this approximation until the corresponding system is satisfied to the prescribed accuracy. For detailed accounts of predictor--corrector methods, see Allgower and Georg~\cite{AllgowerNumericalContinuationMethods1990}.
	
	\subsection{Illustration of Logit-QRE Paths}\label{qre-sec-np}
	
	We first apply the proposed path-following methods to classical extensive-form games to illustrate the resulting logit-QRE paths. Although the four methods are based on different underlying systems, they induce the same realization-plan path. We therefore present this common realization-plan path together with the mixed-strategy path obtained from it through the transformation in~\eqref{gamma2sigma}. The latter reveals the corresponding normal-form logit-QRE path and its limiting equilibrium-selection outcome, whereas the former provides its realization-plan representation.
	
	\begin{example}\label{qre-exm-num1} {\em We consider the extensive-form game depicted in Fig.~\ref{Fig01} and compare the proposed methods with the path generated by the original normal-form system~\eqref{nfeqs:etne}. For this purpose, the initial mixed-strategy profile is chosen to be uniform, that is, $\sigma^{0i}(s^i)=1/|S^i|$ for all $i\in N$ and $s^i\in S^i$. Figs.~\ref{Fig04}--\ref{Fig05} present the realization-plan path generated by the proposed methods and its corresponding mixed-strategy path. The latter coincides with the normal-form logit-QRE path in Fig.~\ref{Fig02}.
	\begin{figure}[htp]
		\centering 
		\begin{minipage}[b]{0.49\textwidth}
			\centering
			\includegraphics[width=1\textwidth, height=0.20\textheight]{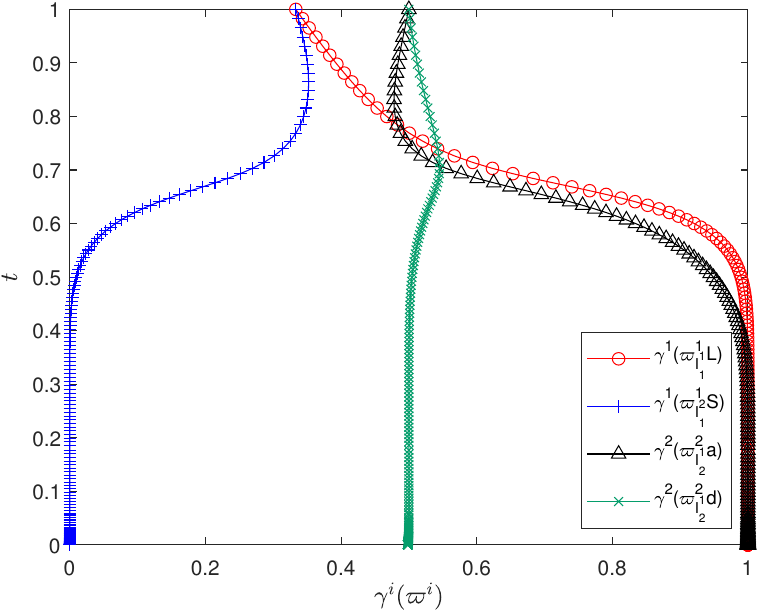}
			\caption{\label{Fig04}{\footnotesize Path of Realization Plans for the Game in Fig.~\ref{Fig01}}} \end{minipage}\hfill
		\begin{minipage}[b]{0.49\textwidth}
			\centering
			\includegraphics[width=1\textwidth, height=0.20\textheight]{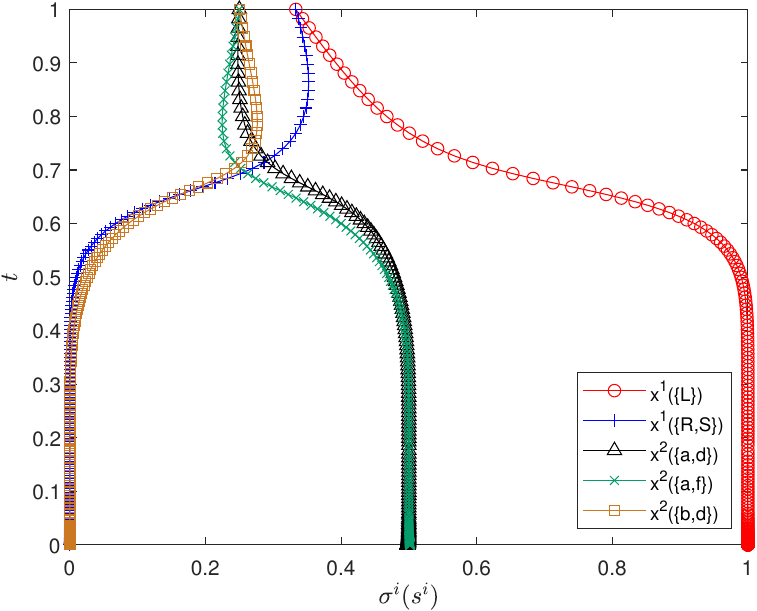}
			\caption{\label{Fig05}{\footnotesize Path of Mixed Strategies for the Game in Fig.~\ref{Fig01}}} \end{minipage}
	\end{figure}
			
	}
	\end{example}
	\begin{example}\label{qre-exm-num2} {\em We further consider the two multiplayer extensive-form games depicted in Figs.~\ref{Fig06} and~\ref{Fig07} to demonstrate the proposed formulation. For each game, the initial mixed-strategy profile is randomly drawn from the interior of the corresponding mixed-strategy space. Figs.~\ref{Fig08}--\ref{Fig11} display the smooth realization-plan paths generated by the proposed methods together with their corresponding paths in mixed-strategy space obtained through the transformation in~\eqref{gamma2sigma}. As $t\to 0$, the induced mixed-strategy paths converge to Nash equilibria of the respective games, recovering the limiting equilibria selected along the corresponding normal-form logit-QRE paths.
	\begin{figure}[htp]
		\centering
		\begin{minipage}[t]{0.50\textwidth}
			\centering
			\includegraphics[width=0.82\textwidth]{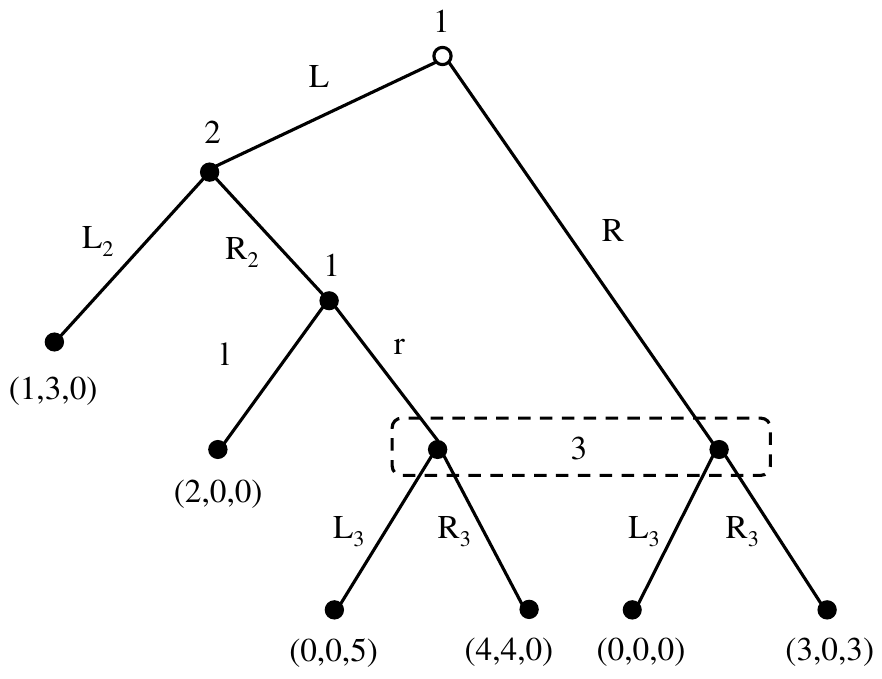}
			\caption{\label{Fig06}{\small An Extensive-Form Game from Selten~\cite{SeltenReexaminationperfectnessconcept1975}}}
		\end{minipage}\hfill
		\begin{minipage}[t]{0.48\textwidth}
			\centering
			\includegraphics[width=0.9\textwidth]{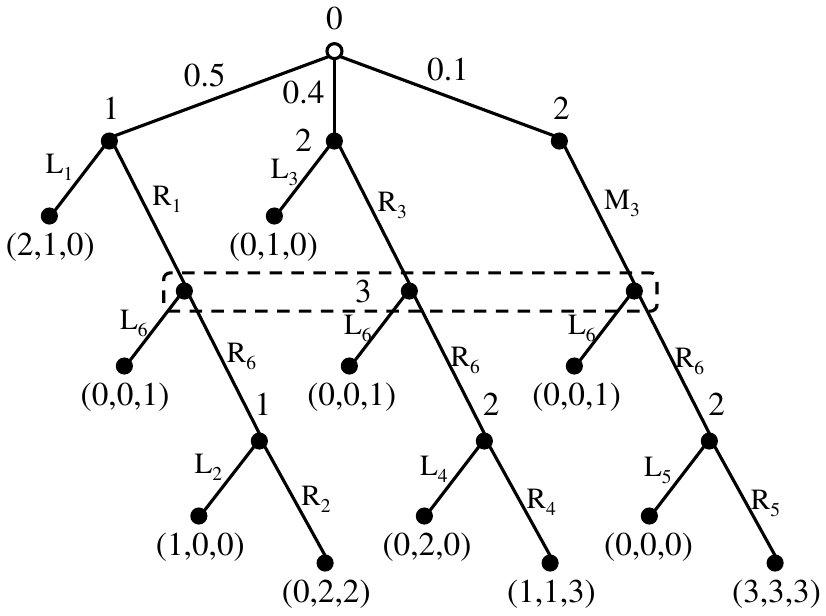}
			\caption{\label{Fig07}{\small An Extensive-Form Game from the OpenCourseWare of MIT}}
		\end{minipage}
	\end{figure}
	\begin{figure}[htp]
		\centering
		\begin{minipage}[b]{0.49\textwidth}
			\centering
			\includegraphics[width=1\textwidth, height=0.20\textheight]{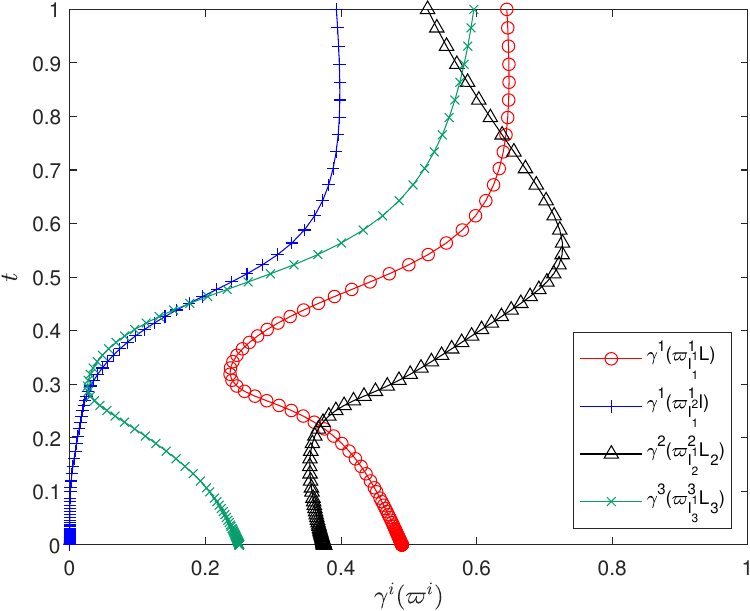}
			\caption{\label{Fig08}{\footnotesize Path of Realization Plans for the Game in Fig.~\ref{Fig06}}} \end{minipage}\hfill
		\begin{minipage}[b]{0.49\textwidth}
			\centering
			\includegraphics[width=1\textwidth, height=0.20\textheight]{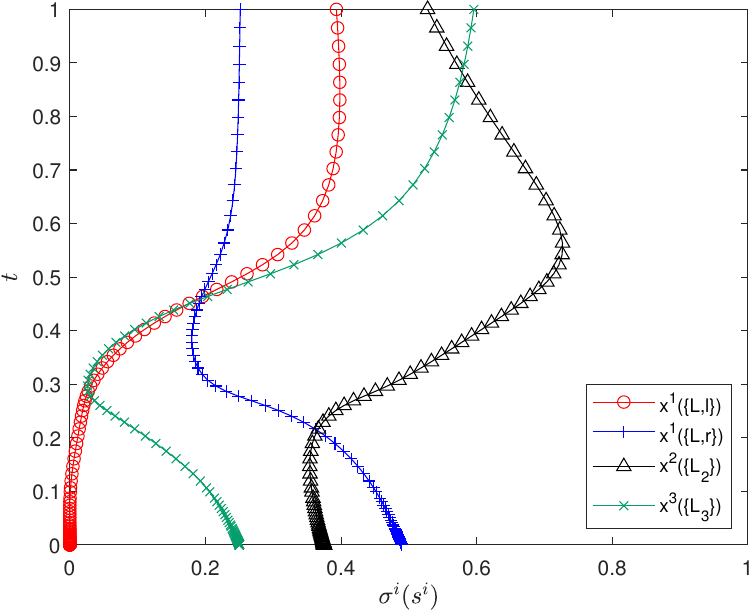}
			\caption{\label{Fig09}{\footnotesize Path of Mixed Strategies for the Game in Fig.~\ref{Fig06}}} \end{minipage}
	\end{figure}
	\begin{figure}[htp]
		\centering
		\begin{minipage}[b]{0.49\textwidth}
			\centering
			\includegraphics[width=1\textwidth, height=0.20\textheight]{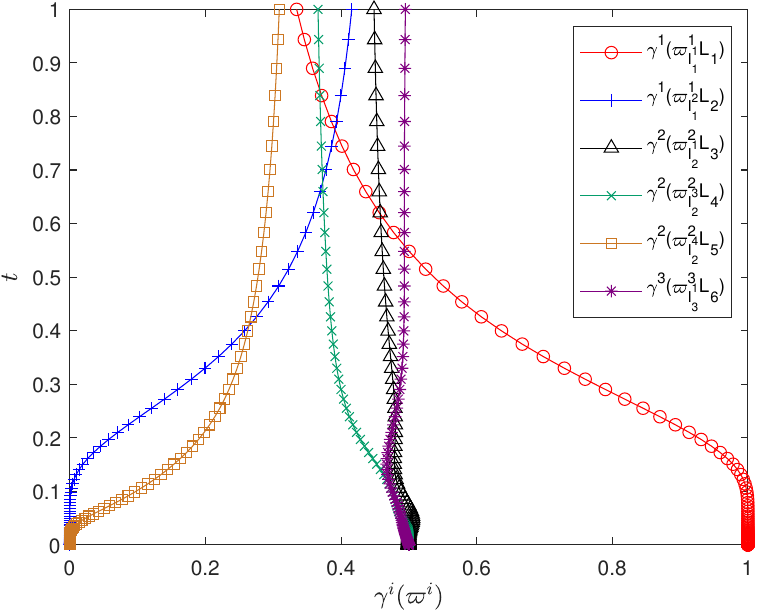}
			\caption{\label{Fig10}{\footnotesize Path of Realization Plans for the Game in Fig.~\ref{Fig07}}} \end{minipage}\hfill
		\begin{minipage}[b]{0.49\textwidth}
			\centering
			\includegraphics[width=1\textwidth, height=0.20\textheight]{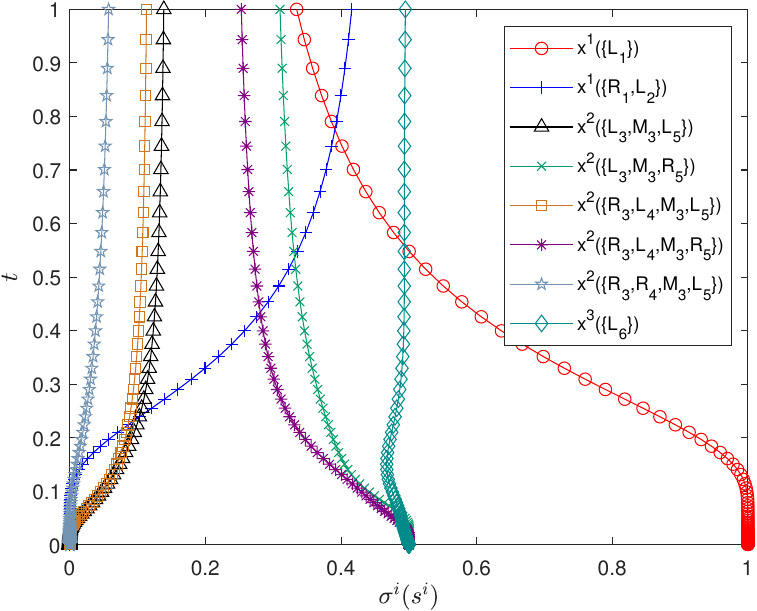}
			\caption{\label{Fig11}{\footnotesize Path of Mixed Strategies for the Game in Fig.~\ref{Fig07}}} \end{minipage}
	\end{figure}
	}
	\end{example}
	\subsection{Computational Comparison of the Proposed Methods}
	We evaluate the computational performance of QM1--QM4 on randomly generated extensive-form games. The original normal-form system is excluded from the computational comparison because its exponentially growing strategy space rapidly renders direct implementation impractical as game size increases, making the comparison uninformative. Specifically, we consider the two game types illustrated in Figs.~\ref{Fig12}--\ref{Fig13}, which were originally employed in the numerical experiments of \cite{houSequenceFormCharacterizationDifferentiable2025}. Each game is parameterized by $(n,\mathcal{L},\mathcal{A})$, where $n$ denotes the number of players, $\mathcal{L}$ is the maximum history depth, and $\mathcal{A}$ is the number of available actions at each information set. Players move cyclically along each history, and the payoff of each player at every terminal node is independently drawn from the uniform distribution on $[-10,10]$.
	
	\begin{figure}[H]
		\centering
		\begin{minipage}[b]{0.55\textwidth}
			\centering
			\includegraphics[width=0.95\textwidth]{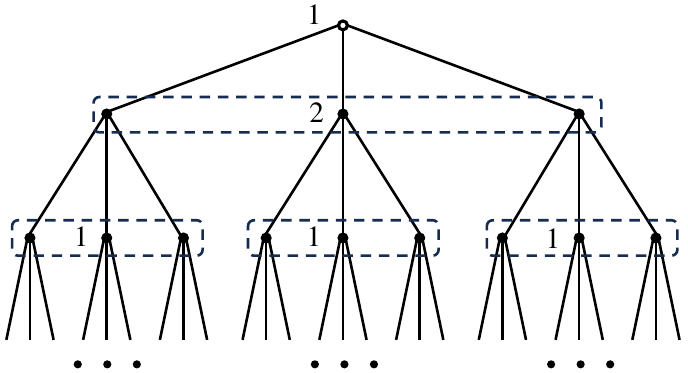}
			\caption{\label{Fig12}{\small Structure of a Type-1 Game}}
		\end{minipage}\hfill
		\begin{minipage}[b]{0.43\textwidth}
			\centering
			\includegraphics[width=0.95\textwidth]{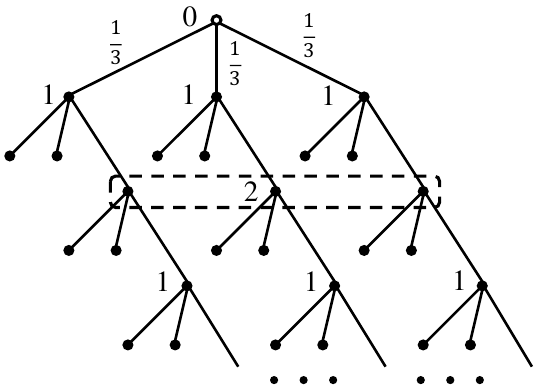}
			\caption{\label{Fig13}{\small Structure of a Type-2 Game}}
		\end{minipage}
	\end{figure}
	
	The number of players does not directly determine game size, thus we fixed $n=3$ for Type~1 games and $n=4$ for Type~2 games, while varying the remaining two parameters to control game size. For each game type and each parameter configuration $(\mathcal{L},\mathcal{A})$, we generated $20$ instances with independently sampled payoff specifications. To ensure a fair comparison, all four methods were initialized, for each instance, from the same randomly generated mixed-strategy profile in the interior of the corresponding mixed-strategy space. The predictor step length was initialized as $t^{0.2}$ and adjusted until the predicted residual satisfied $0.1t^{0.5}$, while the corrector tolerance was set to $10^{-16}t^{0.02}$. These settings were held fixed throughout all numerical implementations. A run was deemed successful when the continuation parameter satisfies $t<10^{-4}$. Conversely, a run was classified as failed when either the prescribed iteration limit or the computational-time limit was exceeded. All computations were conducted on a Windows Server 2016 Standard system equipped with two Intel(R) Xeon(R) E5-2650 v4 processors operating at 2.20 GHz and 128 GB of RAM.
	
	The numerical performance of QM1--QM4 is evaluated in terms of iteration count, computational time, and failure rate. Tables~\ref{Table3}--\ref{Table4} report the corresponding results for the two game types. For the first game type, QM1 generally attains the most favorable median iteration counts and computational times. For the second game type, QM3 consistently yields the smallest median iteration counts and computational times while maintaining a zero failure rate throughout. Although QM4 exhibits comparable reliability for this game type, it generally incurs higher iteration counts and computational times than QM3. Overall, the results show that, despite their path equivalence, the four methods differ substantially in numerical efficiency and reliability.
	
	\begin{table}[htbp]\centering\renewcommand\arraystretch{0.95}\setlength{\tabcolsep}{5pt}
		\caption{Numerical Comparisons for the Game in Fig.~\ref{Fig12}}\label{Table3}
		\begin{tabular*}{\textwidth}{@{\extracolsep\fill}>{\rowmac}c>{\rowmac}l>{\rowmac}l>{\rowmac}l>{\rowmac}l>{\rowmac}l>{\rowmac}l>{\rowmac}l>{\rowmac}l>{\rowmac}l>{\rowmac}l>{\rowmac}l>{\rowmac}l>{\rowmac}l<{\clearrow}}\toprule
			\multirow{2}*{$(\mathcal{L},\mathcal{A})$}&  & \multicolumn{4}{l}{Iteration Counts} & \multicolumn{4}{l}{Computational Time} & \multicolumn{4}{l}{Failure Rates} \\\cmidrule(r){3-6}\cmidrule(r){7-10}\cmidrule(r){11-14}
			& & QM1 & QM2 & QM3 & QM4 & QM1 & QM2 & QM3 & QM4 & QM1 & QM2 & QM3 & QM4\\\midrule
			$(5,2)$ & max & 524 & 227 & 1009 & 18071 & 28.9 & 13.9 & 30.2 & 1069.8 & 0\% & 0\% & 0\% & 0\%\\
			& min & 88 & 148 & 103 & 152 & 3.1 & 5.1 & 3.6 & 4.9 & & & & \\
			&\setrow{\bfseries} med & 140.0 & 173.0 & 164.5 & 195.0 & 5.2 & 6.3 & 5.6 & 6.7 & & & &\\
			$(6,2)$ & max & - & - & 1941 & - & - & - & 599.9 & - & 5\% & 5\% & 0\% & 10\%\\
			& min & 133 & 171 & 148 & 198 & 17.6 & 22.9 & 19.2 & 25.5 & & & & \\
			&\setrow{\bfseries} med & 169.0 & 206.5 & 188.5 & 249.5 & 22.5 & 29.2 & 25.1 & 33.8 & & & &\\
			$(7,2)$ & max & - & - & - & - & - & - & - & - & 10\% & 15\% & 5\% & 5\%\\
			& min & 171 & 197 & 140 & 247 & 115.0 & 133.7 & 85.5 & 147.1 & & & & \\
			&\setrow{\bfseries} med & 244.5 & 291.5 & 226.0 & 362.0 & 193.9 & 234.0 & 149.4 & 212.2 & & & &\\
			$(8,2)$ & max & - & - & - & - & - & - & - & - & 50\% & 45\% & 50\% & 75\%\\
			& min & 159 & 230 & 236 & 331 & 712.4 & 989.8 & 982.3 & 1302.5 & & & & \\
			&\setrow{\bfseries} med & - & 405.5 & - & - & - & 1689.3 & - & - & & & &\\
			$(4,3)$ & max & 208 & 276 & 312 & 343 & 54.1 & 24.0 & 35.8 & 50.9 & 0\% & 0\% & 0\% & 0\%\\
			& min & 103 & 141 & 114 & 199 & 8.9 & 12.3 & 9.8 & 17.2 & & & & \\
			&\setrow{\bfseries} med & 140.0 & 205.0 & 155.0 & 230.5 & 12.6 & 18.1 & 13.8 & 19.9 & & & &\\
			$(4,4)$ & max & - & - & - & - & - & - & - & - & 10\% & 10\% & 15\% & 15\%\\
			& min & 115 & 197 & 144 & 247 & 71.9 & 113.2 & 84.0 & 143.4 & & & & \\
			&\setrow{\bfseries} med & 173.5 & 288.0 & 198.0 & 335.5 & 102.3 & 186.6 & 241.1 & 194.4 & & & &\\
			$(4,5)$ & max & - & - & - & - & - & - & - & - & 25\% & 25\% & 25\% & 35\%\\
			& min & 130 & 250 & 151 & 290 & 423.2 & 786.6 & 523.2 & 859.6 & & & & \\
			&\setrow{\bfseries} med & 174.0 & 327.0 & 202.5 & 357.0 & 609.5 & 1095.9 & 753.1 & 1088.0 & & & &\\
			\bottomrule
		\end{tabular*}
	\end{table}
	\begin{table}[htbp]\centering\renewcommand\arraystretch{0.95}\setlength{\tabcolsep}{5pt}
		\caption{Numerical Comparisons for the Game in Fig.~\ref{Fig13}}\label{Table4}
		\begin{tabular*}{\textwidth}{@{\extracolsep\fill}>{\rowmac}c>{\rowmac}l>{\rowmac}l>{\rowmac}l>{\rowmac}l>{\rowmac}l>{\rowmac}l>{\rowmac}l>{\rowmac}l>{\rowmac}l>{\rowmac}l>{\rowmac}l>{\rowmac}l>{\rowmac}l<{\clearrow}}\toprule
			\multirow{2}*{$(\mathcal{L},\mathcal{A})$}&  & \multicolumn{4}{l}{Iteration Counts} & \multicolumn{4}{l}{Computational Time} & \multicolumn{4}{l}{Failure Rates} \\\cmidrule(r){3-6}\cmidrule(r){7-10}\cmidrule(r){11-14}
			& & QM1 & QM2 & QM3 & QM4 & QM1 & QM2 & QM3 & QM4 & QM1 & QM2 & QM3 & QM4\\\midrule
			$(20,2)$ & max & 604 & 553 & 160 & 206 & 215.7 & 132.3 & 72.4 & 107.0 & 0\% & 0\% & 0\% & 0\%\\
			& min & 96 & 130 & 84 & 156 & 23.3 & 31.7 & 20.7 & 37.2 & & & & \\
			&\setrow{\bfseries} med & 156.0 & 412.5 & 109.0 & 169.0 & 45.1 & 95.6 & 27.1 & 40.7 & & & &\\
			$(30,2)$ & max & - & - & 182 & 422 & - & - & 125.0 & 576.1 & 10\% & 10\% & 0\% & 0\%\\
			& min & 98 & 144 & 96 & 172 & 67.7 & 101.9 & 59.0 & 104.4 & & & & \\
			&\setrow{\bfseries} med & 356.0 & 500.0 & 111.0 & 203.5 & 251.7 & 321.9 & 70.1 & 123.2 & & & &\\
			$(40,2)$ & max & - & 892 & 209 & 261 & - & 1326.8 & 282.0 & 354.3 & 5\% & 0\% & 0\% & 0\%\\
			& min & 110 & 163 & 91 & 205 & 155.7 & 230.1 & 126.4 & 272.0 & & & & \\
			&\setrow{\bfseries} med & 259.5 & 446.5 & 107.0 & 223.5 & 351.7 & 589.0 & 150.2 & 300.8 & & & &\\
			$(50,2)$ & max & - & - & 201 & 530 & - & - & 657.3 & 1452.3 & 55\% & 60\% & 0\% & 0\%\\
			& min & 113 & 303 & 98 & 230 & 278.3 & 710.6 & 232.8 & 511.4 & & & & \\
			&\setrow{\bfseries} med & - & - & 118.5 & 261.5 & - & - & 288.2 & 596.1 & & & &\\
			$(10,6)$ & max & - & 2047 & 144 & 282 & - & 1752.7 & 101.9 & 191.7 & 5\% & 0\% & 0\% & 0\%\\
			& min & 114 & 243 & 108 & 217 & 84.2 & 180.4 & 70.9 & 141.0 & & & & \\
			&\setrow{\bfseries} med & 139.0 & 449.0 & 117.5 & 235.5 & 104.6 & 309.5 & 76.0 & 152.3 & & & &\\
			$(10,8)$ & max & 234 & 609 & 133 & 269 & 392.9 & 839.1 & 172.9 & 348.8 & 0\% & 0\% & 0\% & 0\%\\
			& min & 117 & 235 & 105 & 255 & 169.2 & 343.7 & 135.5 & 325.5 & & & & \\
			&\setrow{\bfseries} med & 167.5 & 518.0 & 117.5 & 260.5 & 246.7 & 701.7 & 150.4 & 338.5 & & & &\\
			$(10,10)$ & max & 307 & 639 & 124 & 287 & 803.5 & 1504.5 & 277.7 & 648.6 & 0\% & 0\% & 0\% & 0\%\\
			& min & 113 & 390 & 111 & 281 & 277.7 & 899.7 & 252.5 & 627.5 & & & & \\
			&\setrow{\bfseries} med & 129.5 & 505.0 & 119.5 & 285.0 & 314.4 & 1193.5 & 271.8 & 640.1 & & & &\\
			$(10,12)$ & max & - & - & 128 & 309 & - & - & 447.6 & 1078.3 & 5\% & 75\% & 0\% & 0\%\\
			& min & 122 & 375 & 115 & 303 & 471.4 & 1420.8 & 406.3 & 1049.1 & & & & \\
			&\setrow{\bfseries} med & 143.5 & - & 122.0 & 307.0 & 550.0 & - & 426.7 & 1065.1 & & & &\\
			\bottomrule
		\end{tabular*}
	\end{table}
	\section{Conclusion}\label{qre-sec-prm6}
	\backmatter
	This paper develops sequence-form formulations of logit QRE and differentiable path-following methods for tracing the associated logit-QRE paths in $n$-player extensive-form games with perfect recall, thereby efficiently computing the Nash equilibria selected by these paths. We construct a dilated-entropy-barrier artificial game in the sequence form and prove that its Nash equilibria characterize the corresponding logistic QREs, thereby avoiding the exponential growth inherent in the normal-form representation. We further develop a sequence-form formulation of logistic QRE with reference to an arbitrary totally mixed strategy profile, , extending the standard equilibrium-selection process beyond the uniform initial profile. The resulting formulation yields differentiable path-following methods for tracing the associated logit-QRE path and obtaining its selected Nash equilibrium. By rewriting the dilated-entropy terms as the standard entropy terms, we further obtain a path-equivalent formulation and derive additional path-following methods. Numerical experiments illustrate the equilibrium-selection process induced by the proposed methods and evaluate their computational performance. Future work will investigate the use of the proposed sequence-form formulations and path-following methods for the selection of equilibrium refinements.
	\appendix
	\section{The Boundedness of $\widetilde{\mathscr{S}}_D$ in Theorem~\ref{thm:lgnecnt}}\label{app:compactness_sl}
	The objective of this appendix is to elucidate the boundedness of $\widetilde{\mathscr{S}}_D$, which is necessary for proving Theorem~\ref{thm:lgnecnt}.
	
	For any sequence $\varpi^i\in W^i$, we use $M_i^+(\varpi^i)$ to denote the index set of all information sets of player $i$ that may arise after $\varpi^i$, not necessarily immediately. Let $(\gamma^*,\nu^*,t)\in\mathscr{S}_D$ be a solution of System~\eqref{eqt:etnesimx0}. Applying backward induction to the first group of equations in System~\eqref{eqt:etnesimx0}, we obtain, for each $i\in N$ and $j\in M_i$,
	\begin{equation}\label{app1:recs}
		\begin{array}{l}
			-\nu^{*i}_{I^j_i}-1+\sum\limits_{j_q\in M^+_i(\varpi^i_{I^j_i}),a_q\in A(I^{j_q}_i)}\frac{\gamma^{*i}(\varpi^i_{I^{j_q}_i}a_q)}{\gamma^{*i}(\varpi^i_{I^{j}_i})}\biggl((1-t)g^i(\varpi^i_{I^{j_q}_i}a_q,\gamma^{*-i})\\
			\hspace{5.5cm}-t\Bigl( \ln \frac{\gamma^{*i}(\varpi^i_{I^{j_q}_i}a_q)}{\gamma^{*i}(\varpi^i_{I^{j_q}_i})} -\ln\frac{\gamma^{0i}(\varpi^i_{I^{j_q}_i}a_q)}{\gamma^{0i}(\varpi^i_{I^{j_q}_i})}\Bigr)\biggr)=0.
		\end{array}
	\end{equation}
	To see this, consider $i\in N$,  $j\in M_i$ such that $(j,a)\in D_i$ for every $a\in A(I_i^j)$, \eqref{app1:recs} follows directly from the first group of equations in System~\eqref{eqt:etnesimx0} after multiplying them by $\gamma^{*i}(\varpi^i_{I^{j}_i}a)/\gamma^{*i}(\varpi^i_{I^{j}_i})$ and summing over $a\in A(I_i^j)$. Now consider the case in which $(j,a)\notin D_i$ for some
	$a\in A(I_i^j)$. Suppose that \eqref{app1:recs} has already been established for every $j_l\in M_i(\varpi^i_{I_i^j}a)$. Substituting the corresponding identity for $\zeta^i_{I_i^j}(a)$ into the first group of equations in System~\eqref{eqt:etnesimx0} gives the desired expression for each $a\in A(I_i^j)$. Multiplying this expression by $\gamma^{*i}(\varpi^i_{I^{j}_i}a)/\gamma^{*i}(\varpi^i_{I^{j}_i})$ and summing over $a\in A(I_i^j)$ gives \eqref{app1:recs} for $i\in N$,  $j\in M_i$.
	
	We know that $-e^{-1}\leq v\ln v\leq 0$ for $0<v\leq 1$. Let $U^i_l=\min_{h\in Z}u^i(h)$, $U^i_u=\max_{h\in Z}u^i(h)$, and $Y^i_l=\min_{\varpi^i\in W^i}\gamma^{0i}(\varpi^i)$. It then follows from \eqref{app1:recs} that, for any $i\in N$ and 
	$j\in M_i$, \[|W^i|(-|U^i_l|+\ln Y^i_l)-1\leq\nu^{*i}_{I^j_i}\leq |W^i|(|U^i_u|+e^{-1})-1.\]
	
	\section{Jacobian Matrix of $p(\cdot)$ Full-Row Rank Proof in Theorem~\ref{thm:lgnesmp}}\label{app:fullrowrank}
	
	This appendix proves that the Jacobian matrix $Dp(\gamma,\nu,t;\alpha)$ of $p(\gamma,\nu,t;\alpha)$ has full row rank on $\text{int}(\Lambda)\times\mathbb{R}^{m_0}\times(0,1)\times\mathbb{R}^{n_0}$, which is critical for the proof of Theorem~\ref{thm:lgnesmp}. \par
	Consider the case where $t\in(0,1)$. We denote the first $n_0$ terms of $p(\gamma,\nu,t;\alpha)$ as $g(\gamma,\nu,t;\alpha)$. The Jacobian matrix $Dp(\gamma,\nu,t;\alpha)$ is given by
	\begin{equation}
		Dp(\gamma,\nu,t;\alpha)=\left(\begin{array}{cccc}
			D_{\gamma} g	& -B^\top	& D_t g		& -t(1-t)I^{n_0\times n_0} \\
			B				& 0			& 0			& 0
		\end{array}\right),
		\nonumber
	\end{equation}
	where $I^{n_0\times n_0}$ is an $n_0\times n_0$ identity matrix, $B=\bar B-\tilde B$, 
	\begin{equation}
		\bar B=\left(\begin{array}{cccc}
			{e^1_1}^\top&&&\\
			&{e^2_1}^\top&&\\
			&&\ddots&\\
			&&&{e^{m_n}_n}^\top
		\end{array}\right)\in \mathbb{R}^{m_0\times n_0} \text{ with } e^{j}_i=(1,1,\ldots,1)^\top\in\mathbb{R}^{|A(I^j_i)|}.
		\nonumber
	\end{equation}
	The matrix $\tilde B\in \mathbb{R}^{m_0\times n_0}$ is defined such that, in each row, the element corresponding to the sequence associated with the relevant information set takes the value $1$, whereas all remaining elements are set to $0$. We observe that $I^{n_0\times n_0}$ and $B$ are of full-row rank. Thus, for any $t\in(0,1)$, the Jacobian matrix $Dp(\gamma,\nu,t;\alpha)$ is of full-row rank.\par
	When $t=1$, System~(\ref{eqt:etnesimx0}) reduces to System~(\ref{nfpe-log-equ-2}), and the Jacobian matrix takes the block form
	\begin{equation}
		Dp(\gamma,\nu,1;\alpha)=\left(\begin{array}{cc}
			G	& -B^\top	\\
			B				& 0			
		\end{array}\right). 
		\nonumber
	\end{equation}
	The block $G=D_{\gamma} g$ is triangular with diagonal entries $-1/\gamma^i(\varpi^i_{I^j_i}a),\; i\in N,j\in M_i,a\in A(I^j_i)$. Since $\gamma\in\Lambda_{++}$, all diagonal entries are nonzero. Hence all eigenvalues of $G$ are nonzero, and consequently
	$G$ is nonsingular. Applying row operations, one can transform $Dp(\gamma,\nu,1;\alpha)$ to 
	\begin{equation}
		Dp(\gamma,\nu,1;\alpha)=\left(\begin{array}{cc}
			G	& -B^\top	\\
			0				& BG^{-1}B^\top			
		\end{array}\right). 
		\nonumber
	\end{equation}
	As \(B\) is of full row rank, $BG^{-1}B^\top$ is nonsingular. Hence $Dp(\gamma,\nu,1;\alpha)$ is nonsingular and therefore has full row rank.
	\newpage
	\bibliography{library}

@book{AllgowerNumericalContinuationMethods1990,
  title = {Numerical Continuation Methods: An Introduction},
  author = {Allgower, Eugene Louis and Georg, Kurt},
  year = 1990,
  volume = {13},
  publisher = {Springer-Verlag},
  address = {Berlin Heidelberg}
}

@article{CaovariantHarsanyitracing2022,
  title = {A variant of Harsanyi's tracing procedures to select a perfect equilibrium in normal form games},
  author = {Cao, Yiyin and Dang, Chuangyin},
  year = 2022,
  month = jul,
  journal = {Games Econ. Behav.},
  volume = {134},
  pages = {127--150},
  doi = {10.1016/j.geb.2022.04.004}
}

@article{Chenextensionquantalresponse2020,
  title = {An extension of quantal response equilibrium and determination of perfect equilibrium},
  author = {Chen, Yin and Dang, Chuangyin},
  year = 2020,
  month = nov,
  journal = {Games Econ. Behav.},
  volume = {124},
  pages = {659--670},
  doi = {10.1016/j.geb.2017.12.023}
}

@article{Chenreformulationbasedsmoothpathfollowing2016,
  title = {A reformulation-based smooth path-following method for computing Nash equilibria},
  author = {Chen, Yin and Dang, Chuangyin},
  year = 2016,
  month = oct,
  journal = {Econ. Theory Bull.},
  volume = {4},
  number = {2},
  pages = {231--246},
  doi = {10.1007/s40505-015-0083-7}
}

@article{Doupnewsimplicialvariable1987,
  title = {A new simplicial variable dimension algorithm to find equilibria on the product space of unit simplices},
  author = {Doup, Timothy Martin and Talman, Albert Jan Johannes},
  year = 1987,
  month = oct,
  journal = {Math. Program.},
  volume = {37},
  number = {3},
  pages = {319--355},
  doi = {10.1007/BF02591741},
  lccn = {3.06}
}

@article{EavesGeneralequilibriummodels1999,
  title = {General equilibrium models and homotopy methods},
  author = {Eaves, Bennett Curtis and Schmedders, Karl},
  year = 1999,
  month = sep,
  journal = {J. Econ. Dyn. Control},
  volume = {23},
  number = {9},
  pages = {1249--1279},
  doi = {10.1016/S0165-1889(98)00073-6}
}

@article{farinaExtensiveFormPerfectEquilibrium2017,
  title = {Extensive-Form Perfect Equilibrium Computation in Two-Player Games},
  author = {Farina, Gabriele and Gatti, Nicola},
  year = 2017,
  month = feb,
  journal = {Proc. AAAI Conf. Artif. Intell.},
  volume = {31},
  number = {1},
  copyright = {Copyright (c)}
}

@book{FiaccoIntroductionSensitivityStability1983,
  title = {Introduction to sensitivity and stability analysis in nonlinear programming},
  author = {Fiacco, Anthony Vincent},
  year = 1983,
  month = aug,
  volume = {165},
  publisher = {Academic Press},
  address = {New York},
  doi = {10.1016/S0076-5392(08)X6041-2}
}

@incollection{GarciaSimplicialApproximationEquilibrium1973,
  title = {Simplicial approximation of an equilibrium point for non-cooperative n-person games},
  booktitle = {Mathematical Programming},
  author = {Garcia, Carlos Benito and Lemke, Carlton Edward and Luethi, Heinz},
  editor = {Hu, T. C. and Robinson, Stephen M.},
  year = 1973,
  month = jan,
  pages = {227--260},
  publisher = {Academic Press},
  address = {New York},
  doi = {10.1016/B978-0-12-358350-5.50011-7}
}

@article{gattiCharacterizationQuasiperfectEquilibria2020,
  title = {A characterization of quasi-perfect equilibria},
  author = {Gatti, Nicola and Gilli, Mario and Marchesi, Alberto},
  year = 2020,
  month = jul,
  journal = {Games Econ. Behav.},
  volume = {122},
  pages = {240--255},
  doi = {10.1016/j.geb.2020.04.012}
}

@article{GovindanglobalNewtonmethod2003,
  title = {A global Newton method to compute Nash equilibria},
  author = {Govindan, Srihari and Wilson, Robert},
  year = 2003,
  month = may,
  journal = {J. Econ. Theory},
  volume = {110},
  number = {1},
  pages = {65--86},
  doi = {10.1016/S0022-0531(03)00005-X}
}

@article{govindanStructureTheoremsGame2002,
  title = {Structure theorems for game trees},
  author = {Govindan, Srihari and Wilson, Robert},
  year = 2002,
  month = jun,
  journal = {Proc. Natl. Acad. Sci.},
  volume = {99},
  number = {13},
  pages = {9077--9080},
  publisher = {Proceedings of the National Academy of Sciences},
  doi = {10.1073/pnas.082249599}
}

@inproceedings{hansenComputationalComplexityComputing2021,
  title = {Computational Complexity of Computing a Quasi-Proper Equilibrium},
  booktitle = {Fundam. Comput. Theory},
  author = {Hansen, Kristoffer Arnsfelt and Lund, Troels Bjerre},
  editor = {Bampis, Evripidis and Pagourtzis, Aris},
  year = 2021,
  pages = {259--271},
  publisher = {Springer International Publishing},
  address = {Cham},
  doi = {10.1007/978-3-030-86593-1_18}
}

@article{HeringsComputationNashEquilibrium2002,
  title = {Computation of the Nash equilibrium selected by the tracing procedure in n-person games},
  author = {Herings, Pierre Jean-Jacques and {van den Elzen}, Antoon},
  year = 2002,
  month = jan,
  journal = {Games Econ. Behav.},
  volume = {38},
  number = {1},
  pages = {89--117},
  doi = {10.1006/game.2001.0856}
}

@article{Heringsdifferentiablehomotopycompute2001,
  title = {A differentiable homotopy to compute Nash equilibria of n-person games},
  author = {Herings, Pierre Jean-Jacques and Peeters, Ronald Johannes Antonius Petrus},
  year = 2001,
  month = jul,
  journal = {Econ. Theory},
  volume = {18},
  number = {1},
  pages = {159--185},
  doi = {10.1007/PL00004129}
}

@misc{houCharacterizationComputationNormalForm2026a,
  title = {Characterization and computation of normal-form proper equilibria in extensive-form games via the sequence-form representation},
  author = {Hou, Yuqing and Cao, Yiyin and Dang, Chuangyin},
  year = 2026,
  month = feb,
  number = {arXiv:2602.10524},
  eprint = {2602.10524},
  primaryclass = {cs},
  publisher = {arXiv},
  doi = {10.48550/arXiv.2602.10524},
  archiveprefix = {arXiv}
}

@misc{houSequenceFormCharacterizationDifferentiable2025,
  title = {A sequence-form characterization and differentiable path-following method for computing normal-form perfect equilibria in extensive-form games},
  author = {Hou, Yuqing and Cao, Yiyin and Dang, Chuangyin and Wang, Yong},
  year = 2025,
  month = nov,
  number = {arXiv:2505.13827},
  eprint = {2505.13827},
  primaryclass = {cs},
  publisher = {arXiv},
  doi = {10.48550/arXiv.2505.13827},
  archiveprefix = {arXiv}
}

@article{houSequenceformDifferentiablePathfollowing2025,
  title = {A sequence-form differentiable path-following method to compute Nash equilibria},
  author = {Hou, Yuqing and Cao, Yiyin and Dang, Chuangyin and Wang, Yong},
  year = 2025,
  month = sep,
  journal = {Comput. Optim. Appl.},
  volume = {92},
  number = {1},
  pages = {265--300},
  doi = {10.1007/s10589-025-00702-y}
}

@article{KohlbergStrategicStabilityEquilibria1986,
  title = {On the strategic stability of equilibria},
  author = {Kohlberg, Elon and Mertens, Jean-Francois},
  year = 1986,
  month = sep,
  journal = {Econometrica},
  volume = {54},
  number = {5},
  eprint = {1912320},
  eprinttype = {jstor},
  pages = {1003--1037},
  doi = {10.2307/1912320},
  lccn = {6.383}
}

@article{Kollercomplexitytwopersonzerosum1992,
  title = {The complexity of two-person zero-sum games in extensive form},
  author = {Koller, Daphne and Megiddo, Nimrod},
  year = 1992,
  month = oct,
  journal = {Games Econ. Behav.},
  volume = {4},
  number = {4},
  pages = {528--552},
  doi = {10.1016/0899-8256(92)90035-Q}
}

@article{KollerEfficientComputationEquilibria1996,
  title = {Efficient computation of equilibria for extensive two-person games},
  author = {Koller, Daphne and Megiddo, Nimrod and {von Stengel}, Bernhard},
  year = 1996,
  month = jun,
  journal = {Games Econ. Behav.},
  volume = {14},
  number = {2},
  pages = {247--259},
  doi = {10.1006/game.1996.0051}
}

@article{KollerRepresentationssolutionsgametheoretic1997,
  title = {Representations and solutions for game-theoretic problems},
  author = {Koller, Daphne and Pfeffer, Avi},
  year = 1997,
  month = jul,
  journal = {Artif. Intell.},
  volume = {94},
  number = {1},
  pages = {167--215},
  doi = {10.1016/S0004-3702(97)00023-4},
  lccn = {14.05}
}

@article{KuhnExtensiveGames1950,
  title = {Extensive games},
  author = {Kuhn, Harold William},
  year = 1950,
  month = oct,
  journal = {Proc. Natl. Acad. Sci.},
  volume = {36},
  number = {10},
  pages = {570--576},
  publisher = {National Academy of Sciences},
  doi = {10.1073/pnas.36.10.570},
  lccn = {1}
}

@article{LemkeEquilibriumPointsBimatrix1964,
  title = {Equilibrium points of bimatrix games},
  author = {Lemke, Carlton Edward and Howson, Jr., James Thomas},
  year = 1964,
  month = jun,
  journal = {J. Soc. Ind. Appl. Math.},
  volume = {12},
  number = {2},
  pages = {413--423},
  publisher = {{Society for Industrial and Applied Mathematics}},
  doi = {10.1137/0112033}
}

@article{liArbitraryStartingTracing2020,
  title = {An Arbitrary Starting Tracing Procedure for Computing Subgame Perfect Equilibria},
  author = {Li, Peixuan and Dang, Chuangyin},
  year = 2020,
  month = aug,
  journal = {J. Optim. Theory Appl.},
  volume = {186},
  number = {2},
  pages = {667--687},
  doi = {10.1007/s10957-020-01703-z},
  lccn = {3}
}

@article{mas-colellNoteTheoremBrowder1974,
  title = {A note on a theorem of F. Browder},
  author = {{Mas-Colell}, Andreu},
  year = 1974,
  month = dec,
  journal = {Math. Program.},
  volume = {6},
  number = {1},
  pages = {229--233},
  doi = {10.1007/BF01580239},
  lccn = {2}
}

@article{mckelveyQuantalResponseEquilibria1995,
  title = {Quantal Response Equilibria for Normal Form Games},
  author = {McKelvey, Richard D. and Palfrey, Thomas R.},
  year = 1995,
  month = jul,
  journal = {Games Econ. Behav.},
  volume = {10},
  number = {1},
  pages = {6--38},
  doi = {10.1006/game.1995.1023}
}

@article{mckelveyQuantalResponseEquilibria1998,
  title = {Quantal Response Equilibria for Extensive Form Games},
  author = {Mckelvey, Richard D. and Palfrey, Thomas R.},
  year = 1998,
  journal = {Exp. Econ.},
  volume = {1},
  number = {1},
  pages = {9--41},
  doi = {10.1023/A:1009905800005}
}

@article{MiltersenComputingquasiperfectequilibrium2010,
  title = {Computing a quasi-perfect equilibrium of a two-player game},
  author = {Miltersen, Peter Bro and S{\o}rensen, Troels Bjerre},
  year = 2010,
  month = jan,
  journal = {Econ. Theory},
  volume = {42},
  number = {1},
  pages = {175--192},
  doi = {10.1007/s00199-009-0440-6}
}

@article{NashEquilibriumpointsnperson1950,
  title = {Equilibrium points in n-person games},
  author = {Nash, John F.},
  year = 1950,
  month = jan,
  journal = {Proc. Natl. Acad. Sci.},
  volume = {36},
  number = {1},
  pages = {48--49},
  publisher = {Proceedings of the National Academy of Sciences},
  doi = {10.1073/pnas.36.1.48}
}

@book{OsborneCourseGameTheory1994,
  title = {A Course in Game Theory},
  author = {Osborne, Martin John and Rubinstein, Ariel},
  year = 1994,
  month = jul,
  volume = {1},
  publisher = {The MIT Press},
  address = {Cambridge},
  googlebooks = {mnv1DAAAQBAJ}
}

@article{romanovskiiReductionGameComplete1962,
  title = {Reduction of a game with complete memory to a matrix game},
  author = {Romanovskii, I. V.},
  year = 1962,
  journal = {Dokl.akad.nauk Sssr},
  volume = {3},
  number = {3},
  pages = {62--64}
}

@article{RosenmullerGeneralizationLemkeHowson1971,
  title = {On a generalization of the Lemke--Howson algorithm to noncooperative n-person games},
  author = {Rosenm{\"u}ller, Johann},
  year = 1971,
  month = jul,
  journal = {SIAM J. Appl. Math.},
  volume = {21},
  number = {1},
  pages = {73--79},
  publisher = {{Society for Industrial and Applied Mathematics}},
  doi = {10.1137/0121010},
  lccn = {2.148}
}

@article{SeltenReexaminationperfectnessconcept1975,
  title = {Reexamination of the perfectness concept for equilibrium points in extensive games},
  author = {Selten, Reinhard},
  year = 1975,
  month = mar,
  journal = {Int. J. Game Theory},
  volume = {4},
  number = {1},
  pages = {25--55},
  doi = {10.1007/BF01766400},
  lccn = {4}
}

@article{turocyDynamicHomotopyInterpretation2005,
  title = {A dynamic homotopy interpretation of the logistic quantal response equilibrium correspondence},
  author = {Turocy, Theodore L.},
  year = 2005,
  month = may,
  journal = {Games Econ. Behav.},
  volume = {51},
  number = {2},
  pages = {243--263},
  doi = {10.1016/j.geb.2004.04.003}
}

@article{vandenelzenProcedureFindingNash1991,
  title = {A procedure for finding Nash equilibria in bi-matrix games},
  author = {{van den Elzen}, Arnoldus Hendrikus and Talman, Albert Jan Johannes},
  year = 1991,
  month = jan,
  journal = {Z. Oper. Res.},
  volume = {35},
  number = {1},
  pages = {27--43},
  doi = {10.1007/BF01415958}
}

@article{vanderLaanComputationFixedPoints1982,
  title = {On the computation of fixed points in the product space of unit simplices and an application to noncooperative N person games},
  author = {{van der Laan}, Gerrit and Talman, Albert Jan Johannes},
  year = 1982,
  month = feb,
  journal = {Math. Oper. Res.},
  volume = {7},
  number = {1},
  pages = {1--13},
  doi = {10.1287/moor.7.1.1},
  lccn = {2.215}
}

@article{vonStengelComputingNormalForm2002,
  title = {Computing normal form perfect equilibria for extensive two-person games},
  author = {{von Stengel}, Bernhard and {van den Elzen}, Antoon and Talman, Dolf},
  year = 2002,
  month = mar,
  journal = {Econometrica},
  volume = {70},
  number = {2},
  pages = {693--715},
  doi = {10.1111/1468-0262.00300},
  lccn = {1}
}

@article{vonStengelEfficientComputationBehavior1996,
  title = {Efficient computation of behavior strategies},
  author = {{von Stengel}, Bernhard},
  year = 1996,
  month = jun,
  journal = {Games Econ. Behav.},
  volume = {14},
  number = {2},
  pages = {220--246},
  doi = {10.1006/game.1996.0050}
}

@article{WilsonComputingEquilibriaNPerson1971,
  title = {Computing equilibria of n-person games},
  author = {Wilson, Robert},
  year = 1971,
  month = jul,
  journal = {SIAM J. Appl. Math.},
  volume = {21},
  number = {1},
  pages = {80--87},
  publisher = {{Society for Industrial and Applied Mathematics}},
  doi = {10.1137/0121011},
  lccn = {2.148}
}
\end{document}